\newcommand\myshade{85}
\colorlet{mylinkcolor}{violet}
\colorlet{mycitecolor}{YellowOrange}
\colorlet{myurlcolor}{Aquamarine}
\newcommand{\QQ}{{\mathcal{Q}}}
\newcommand{\R}{\mathbb{R}}
\newcommand{\Q}{\mathbb{Q}}
\newcommand{\N}{\mathbb{N}}
\newcommand{\NN}{\mathcal{N}}
\newcommand{\p}{{\partial}}
\newcommand{\dd}[2]{\dfrac{\partial #1}{\partial #2}}
\newcommand{\PP}{{\mathcal{P}}}
\newcommand{\Tr}{{\operatorname{Tr}}}
\newcommand{\lctr}{{\operatorname{Tr}}}
\newcommand{\Hall}{{\mathrm{Hall}}}
\newcommand{\spec}{{\mathrm{Spec}}}
\newcommand{\GG}{\mathcal{G}}
\newcommand{\tg}{{\tilde{g}}}
\newcommand{\supp}{\mathrm{supp}}
\newcommand{\Bb}{\mathbb{B}}
\newcommand{\Z}{\mathbb{Z}}
\newcommand{\G}{\mathbb{G}}
\newcommand{\Dd}{\mathbb{D}}
\newcommand{\Pp}{\mathbb{P}}
\newcommand{\bbLambda}{\mathbb{\Lambda}}
\newcommand{\CC}{\mathcal{C}}
\newcommand{\One}{\mathds{1}}
\newcommand{\II}{\mathcal{X}}
\newcommand{\MM}{\mathcal{M}}
\newcommand{\UU}{{\mathcal{U}}}
\newcommand{\KK}{{\mathcal{K}}}
\newcommand{\Hh}{{\mathbb{H}}}
\newcommand{\C}{\mathbb{C}}
\newcommand{\oz}{{\overline{z}}}
\renewcommand{\Im}{\operatorname{Im}}
\newcommand{\1}{\mathds{1}}
\newcommand{\RR}{\mathcal{R}}
\newcommand{\de}{ \ \mathrel{\stackrel{\makebox[0pt]{\mbox{\normalfont\tiny def}}}{=}} \ }
\numberwithin{equation}{section}
\newcommand{\xz}{\color{JungleGreen}}
\newcommand{\PUV}{{\Psi_{U,V}}}
\newcommand{\bfx}{{\bm{x}}}
\newcommand{\bfz}{{\bm{z}}}
\newcommand{\MG}{{\mathcal G}}
\renewcommand{\Im}[1]{\operatorname{\mathbb{I}\mathbbm{m}}\{#1\}}
\title[Absolutely continuous spectrum for topological insulators]{Absolutely continuous spectrum for truncated topological insulators}
\author{Alexis Drouot}
\address[Alexis Drouot]{University of Washington, Seattle, USA.} 
\email{adrouot@uw.edu}
\author{Jacob Shapiro}
\address[Jacob Shapiro]{Princeton University, Princeton, USA} 
\email{jacobshapiro@princeton.edu}
\author{Xiaowen Zhu}
\address[Xiaowen Zhu]{University of Washington, Seattle, USA.} 
\email{xiaowenz@uw.edu}
\newtheorem{thm}{Theorem}
\newtheorem{definition}{Definition}
\newtheorem{lemma}{Lemma}[section]
\newtheorem{proposition}{Proposition}
\newtheorem{theorem}[thm]{Theorem}
\theoremstyle{definition}
\newtheorem{rmk}{Remark}[section]
\newcommand{\ii}{\operatorname{i}}
\newcommand{\findex}{\operatorname{ind}}
\newcommand{\eql}[1]{\begin{align}#1\end{align}}
\begin{document} 
\newcommand{\ee}{\operatorname{e}}
\newcommand\norm[1]{\left\lVert#1\right\rVert}

\begin{abstract} 
We show that if a topological insulator is truncated along a curve that separates the plane in two sufficiently large regions, then the edge system admits absolutely continuous spectrum. 
Our approach combines a recent version of the bulk-edge correspondence along curves that separates geometry and intrinsic conductance \cite{DZ24}, with a result about absolutely continuous spectrum for straight edges \cite{BW22}.
\end{abstract}

\maketitle

\section{Introduction}

Topological insulators are quantum materials that experimentally exhibit unique electronic properties. While insulating in their bulk, they conduct electricity along their boundaries. This results in  mobile edge states impervious to disorder \cite{Halperin1982_PhysRevB.25.2185,Aizenman_Graf_1998,EGS05} and has motivated the search for quasi-particles with exotic statistics \cite{KITAEV20062,Alicea2011,RevModPhys.80.1083}.  They are also a focus in quantum information theory because of their connections with the computational complexity of many-body systems \cite{KempeKitaevRegev2006}. This makes topological insulators a central theme in condensed matter physics.

Mathematically, this electronic behavior is explained by an index theorem called the bulk-edge correspondence \cite{Hatsugai,KS02}. It relates the topological index of an infinite (bulk) system with the index of its half-space truncation (edge). The edge index results from spontaneous currents running along the boundary of the sample, while the bulk index corresponds to electricity as a response to voltage. As a direct corollary of the bulk-edge correspondence, bulk spectral gaps are filled for topological insulators truncated to half-spaces. 

Because the resulting states experimentally correspond to \emph{currents}, the RAGE theorem \cite[Theorem 2.6]{AizenmanWarzel2016} suggests that the bulk gaps are filled by absolutely continuous spectrum. This is a rare occasion where one may establish the existence of \emph{extended states}, see Fr\"ohlich--Graf--Walcher \cite{Frohlich2000}, de Bi\'evre--Pul\'e \cite{DP02}, and Germinet--Klein \cite{Germinet_Klein_Schenker_2007} for Landau-like Hamiltonians and proofs via Moure estimates. In a recent paper, Bols--Werner \cite{BW22} extended these results to general tight-binding models by relying instead on the bulk-edge correspondence and an index theory result \cite{ABJ20}. 



The first proof of the bulk-edge correspondence dates back to \cite{Hatsugai}; see also \cite{KS02,EG02}. The scenario where one works with a curved edge instead of a straight edge has been investigated only recently \cite{LT22,L23, DZ23, DZ24}. In analogy with \cite{BW22}, it is natural to ask whether absolutely continuous spectrum emerges in bulk spectral gaps of such \textit{curved} edge systems. In this note, we show that if the truncation region and its complement contain parabolas, then the bulk gap of the edge system is filled with absolutely continuous spectrum. Our analysis combines the Bols--Werner approach with a novel version of the bulk-edge correspondence in curved settings \cite{DZ24}, which decorrelates the geometry of the model to the intrinsic characteristic of the bulk.

\subsection{Setup and main result.} 
We consider the motion of electrons in a two-dimensional material in the single-particle and discrete space approximations. In this setting, the dynamics of one electron are governed by a Hamiltonian--a bounded self-adjoint operator $H$ on $\ell^2(\Z^2;\C^m)$ that is local:

\begin{definition}[Local operator] $H$ is \emph{local} if there exists $\nu>0$ such that its 
kernel satisfies
\[
    \norm{H_{xy}} \leq \nu^{-1} \ee^{-\nu\norm{x-y}}, \qquad x,y\in\Z^2.
\]
\end{definition}

We will work here with \textit{insulators:}

 \begin{definition}[Spectrally-gaped insulator] A local Hamiltonian $H$ models an insulating system at the Fermi energy $E_F\in\R$ if \eql{\label{eq:spectral gap condition} E_F\notin \spec(H) \,. } 
 \end{definition}

We mention that \eqref{eq:spectral gap condition} is merely sufficient to define an inulator: for instance, a mobility gap can also give rise to insulators in a physically meaningful way; see \cite[Equations (1.2),(1.3)]{EGS05}. 
Insulators have vanishing \emph{longitudinal} conductivity. However, their transversal conductivity -- the \textit{Hall conductivity} -- can be non-zero. Without translation invariance, it is given (at zero temperature) by the Kubo formula:

\begin{definition} The Hall conductance of a local Hamiltonian $H$, insulating at energy $E_F$ is
\begin{equation}
    \sigma_\mathrm{Hall}(H) := -\ii \Tr \big( P \big[ [P,\Lambda_2], [P,\Lambda_1] \big] \big)\,,
\end{equation}
    where $P$ denote the spectral projection of $H$ below energy $E_F$ and $\Lambda_1, \Lambda_2$ are the characteristic functions of $\{ X_1 > 0 \}$ and $\{ X_2 > 0\}$, respectively.
\end{definition}

When the insulating bulk system is truncated to a subset or when two \emph{different} bulk systems are brought together along an edge, the insulating condition \eqref{eq:spectral gap condition} may break down. The system may then support spontaneous currents along the edge. The present work is an investigation of this phenomenon.

\subsection{Truncation of bulk systems}

To model an interface system, we choose a subset $U\subset \R^2$ whose boundary $\partial U$ will be the interface between two different bulk systems. Precisely,

\begin{definition}[interface Hamiltonian]\label{def:DW edge Hamiltonian} Let $H_\pm$ be two local insulating bulk Hamiltonians at the same energy $E_F \in \R$ and $U\subset \R^2$. An interface Hamiltonian compatible with $H_\pm$ and $U$ is a \emph{local} self-adjoint bounded operator $\widehat{H}$ such that for some $\nu>0$, \eql{\label{eq:interface edfge} \norm{E_{xy}}\leq\nu^{-1} e^{-\nu d\left(x,\partial U\right)}, \qquad x,y\in\Z^2,  } where $E := \widehat{H}- \Lambda_U H_+ \Lambda_U - \Lambda_{U^c} H_- \Lambda_{U^c}$. 
\end{definition}

\Cref{def:DW edge Hamiltonian} is related to another model of edge systems (see e.g. \cite{EG02,FSSWY20}): working with the Hilbert space $\ell^2(\Z\times\N,\C^m)$ and asking that far away from the edge $\Z\times \{0\}$ the bulk and the edge operators agree. 

\subsection{Main result}
The goal of this note is to show that under a mild condition on $U$, an interface Hamiltonian interpolating between different topological phases has absolutely continuous spectrum in the bulk gap. Hence, by the RAGE theorem, it behaves as a conductor. 

\begin{definition}\label{def:1} A parabolic region is a set of the form
\eql{
    \big\{ My \in \R^2 :  \ y_1 \geq 0, \ |y_2| \leq y_1^\alpha \big\},
}
where $\alpha > 0$ and $M$ is a rigid motion of $\R^2$ (the composition of a translation and a rotation).
\end{definition}

We are ready for our main result:

\begin{theorem}\label{thm:existence of ac spectrum for edge} Assume that:
\begin{itemize}
    \item[(a)] $H_\pm$ are two local Hamiltonians with a joint spectral gap $\GG\subseteq\R$, and distinct Hall conductance at energies in this gap.
    \item[(b)] $\widehat{H}$ is an interface edge Hamiltonian compatible with $H_\pm$ and $U$ (see \Cref{def:DW edge Hamiltonian}).
    \item[(c)] $U \subset \R^2$ is such that both $U$ and $U^c$ contain parabolic regions.
\end{itemize}
Then the absolutely continuous spectrum of $\widehat{H}$ contains $\GG$. 
\end{theorem}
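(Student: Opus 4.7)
The plan is to execute the strategy announced in the abstract: apply the curved bulk–edge correspondence of \cite{DZ24} to certify a non-trivial edge conductance along $\partial U$, then invoke the Bols–Werner framework \cite{BW22} to upgrade that non-triviality to absolutely continuous spectrum filling the bulk gap $\GG$.

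For the first step, I would verify that (a) and (c) are exactly the hypotheses needed to apply \cite{DZ24} to $\widehat{H}$. The point of that paper is to decouple the intrinsic bulk invariant $\sigma_\Hall(H_+) - \sigma_\Hall(H_-)$ from the geometry of the interface: one obtains that an appropriate edge conductance of $\widehat{H}$ localized on $\partial U$ equals $\sigma_\Hall(H_+) - \sigma_\Hall(H_-)$, which is non-zero by (a). Assumption (c) provides the geometric input: it ensures that $\partial U$ separates the plane into two regions with enough interior room for the trace-class expressions defining the edge conductance to converge, and for $\widehat{H}$ to be well approximated by $H_\pm$ deep inside each side.

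For the second step, I would mimic the reasoning of \cite{BW22}. Bols and Werner show that a straight-edge Hamiltonian with non-trivial edge conductance must have absolutely continuous spectrum filling the bulk gap; the key tool is an index theorem on pairs of projections from \cite{ABJ20}. My plan is to reproduce this mechanism for $\widehat{H}$: combining the non-vanishing edge index from step one with \cite{ABJ20} should prohibit the scenario in which some $E \in \GG$ is absent from $\sigma_\ac(\widehat{H})$, because in that case the Fredholm pair controlling the edge index could be continuously deformed to a trivial one across the gap without encountering absolutely continuous spectrum, contradicting its non-zero integer value. An equivalent concrete formulation constructs Weyl sequences for $\widehat{H}$ at each $E \in \GG$ from the non-trivial boundary current.

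The main obstacle will be transferring the \cite{BW22} scheme to the curved geometry. Their argument exploits translation invariance along a straight edge to set up the Fredholm pair of projections, and along $\partial U$ this symmetry is lost. Here (c) saves us: each parabolic region supplies a direction to infinity along which one can formulate and compute an analog of the \cite{BW22} index, while the parabolic growth of the containing regions yields the Combes–Thomas-type decay estimates needed to keep the relevant traces absolutely convergent. Turning this intuition into a rigorous pair-of-projections argument in the curved setting — reconciling the intrinsic geometric flexibility of \cite{DZ24} with the rigid transverse-cut construction of \cite{BW22} — is what I expect to be the technical heart of the proof.
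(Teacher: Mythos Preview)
Your high-level two-step plan (apply \cite{DZ24}, then \cite{BW22}) is exactly the paper's strategy, but you have misidentified where the work actually lies, and this leaves a genuine gap.

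The Bols--Werner input is not something that needs to be ``transferred to the curved geometry.'' In the form used here (the paper's \Cref{thm:BW}, resting on \cite[Theorem~2.1]{ABJ20}), it is a purely abstract statement: if $W$ is unitary, $Q$ a projection, $[W,Q]$ trace-class, and $\findex(QWQ+Q^\perp)\neq 0$, then $\spec_\ac(W)=\Ss^1$. No straight edge, no translation invariance. So once you have \emph{some} set $V$ for which $[\exp(2\pi\ii g(\widehat H)),\Lambda_V]$ is trace-class and the associated index $\widehat{\NN}(\widehat H,V)$ is non-zero, the ac-spectrum conclusion is immediate. Your second step, and the ``main obstacle'' paragraph, are therefore aimed at a non-problem.

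The actual technical heart --- which your proposal does not isolate --- is the \emph{construction of this auxiliary set $V$}. The edge conductance of \cite{DZ24} is not intrinsically attached to $\partial U$; it depends on a second set $V$ transverse to $U$ (in the sense of \Cref{def:transverse}), and the formula reads
\[
\widehat{\NN}(\widehat H,V)=\chi_{U,V}\cdot\big(\NN(H_+)-\NN(H_-)\big),
\]
with a geometric intersection number $\chi_{U,V}$ in front, not simply the Hall-conductance difference. So hypothesis (a) alone is useless until you produce a $V$ that is (i) transverse to $U$ and (ii) has $\chi_{U,V}\neq 0$. This is precisely what assumption (c) buys, and it is the content of \Cref{prop:existence of transversal set}: take $\partial V$ to run along the central axes of the two parabolic regions (one in $U$, one in $U^c$), joined by a segment; the parabolic width growth is what guarantees $d(x,\partial U)+d(x,\partial V)\gtrsim |x|^\alpha$, i.e.\ transversality, and the fact that the curve goes from $U^c$ into $U$ forces $\chi_{U,V}=\pm 1$. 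Your remarks about ``directions to infinity'' and ``Combes--Thomas-type decay'' are groping toward this, but you should recast them as the construction of $V$ feeding into \cite{DZ24}, not as an adaptation of \cite{BW22}.
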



The condition (c), see \Cref{fig:0}, is stronger than the one given in \cite{DZ23}, but also yields a stronger result. There we merely asked that $U$ and $U^c$ contain arbitrarily large balls instead of parabolic regions and we proved that $\GG \subset \spec(\widehat{H})$, 
with no mention of spectral type.

\begin{figure}[b] 
\caption{\label{fig:0} The set $U$ (in blue) satisfies the assumptions of \Cref{thm:existence of ac spectrum for edge}: both $U$ and $U^c$ contain the parabolic regions $\PP_+$ and $\PP_-$. If $U$ and $U^c$ are filled by different topological phases, \Cref{thm:existence of ac spectrum for edge} predicts that $\p U$ support asymmetric edge states.} 
    \includegraphics[scale=1]{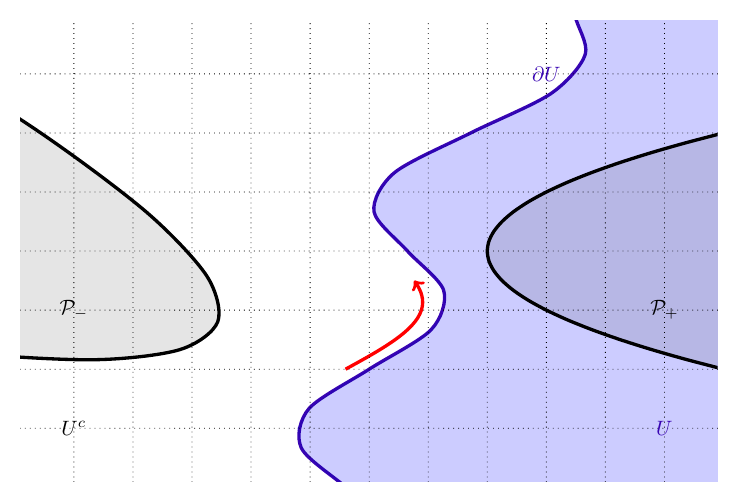}  
\end{figure}

\subsection{Strategy.} 

To explain our strategy let us delve into topological indices.

\subsubsection{The bulk topological indices}\label{subsubsec:bulk topology}
For insulating bulk systems, there is a well-defined topological index which in physics is usually referred to as the \emph{Chern number} \cite{Hasan_Kane_2010}. It is associated to the Hall conductance of the integer quantum Hall effect, i.e., to a linear response of the system to the application of electric voltage. As such, topologically the Chern number is usually written (in our setting which does not enjoy translation invariance) \cite{ASS94,FSSWY20} as \eql{\label{eq:Chern number} \NN(H) = \findex(\Pp U) = \findex(\bbLambda_1 \exp(2 \pi \ii \Lambda_2 P \Lambda_2 )) } where we use the short-hand notation $\Q W \equiv Q W Q + Q^\perp$ for a projection $Q$ and a unitary $W$; and $P \equiv \chi_{(-\infty,E_F)(H)}$ is the Fermi projection associated to $H$ (at Fermi energy $E_F\in\R$--i.e., all indices depend also on the choice of $E_F$) and $\Lambda_j \equiv \chi_{(0,\infty)}(X_j)$ for $j=1,2$ is the switch function along the $1,2$ axis; $U\equiv\exp(\ii \arg(X_1+\ii x_2))$ is the so-called Laughlin flux-insertion. We shall refer to the first expression in \eqref{eq:Chern number} as the Laughlin flux-insertion formula \cite{BvES94} and to the second as Kitaev's formula \cite{KITAEV20062}. We are also using the fact that if $Q$ is a self-adjoint projection and $W$ is a unitary, such that $[Q,W]\in\KK$ then $\Q W$ is a Fredholm operator. Local Hamiltonians which are spectrally-gapped at $E_F\in\R$ have a local Fermi projection $P$, in which case $[P,U]\in\KK$ (see e.g. \cite[Lemma A.1]{Bols2023}). In this same circumstance on $H$, it is also known \cite{FSSWY20} that $[\Lambda_1,\exp(2 \pi \ii \Lambda_2 P \Lambda_2 )) ]\in\KK$, so that both indices are well-defined. The fact that both are equal (without recourse to the double-commutator trace formula, but strictly via homotopy of operators) may be found in \cite[Lemma 4.7]{Bols2023}.

Considered from the perspective of linear response theory, the Chern number above is known to be equal to the Hall conductivity given by the Kubo formula \eql{\label{eq:double-commutator-formula} \NN(H) = - 2\pi\sigma_{\mathrm{Hall}}(H) = 2\pi \ii \lctr\left(P [[\Lambda_1,P],[\Lambda_2,P]]\right)\,.} It is known that when $P$ is local, the above double-commutator is indeed trace-class \cite{EGS05}.

\subsubsection{Edge systems and their topology}
We wish to define what it means to be an edge system on an independent footing \emph{without reference} to a pre-existing bulk system following \cite[Def. 3.4]{SW22a} or \cite[Def. 2.8]{Bols2023}. It stands to reason that taking a direct consequence of \Cref{def:DW edge Hamiltonian} which guarantees topological indices are well-defined is appropriate.
\begin{definition}[Abstract edge systems]\label{def:edge system} Let $\G\subset\Z^2$ (we have in mind either $\G=\Z^2$ for interface or $\G=U$ for sharp edges), $V\subset\G$ be two infinite sets and $E_F\in\R$ be given. Then the local Hamiltonian $\widehat{H}$ on $\ell^2(\G,\C^m)$ is said to be an edge Hamiltonian which descends from an insulating bulk system, with edge transverse to $\partial V$, iff there exists some smooth $g:\R\to[0,1]$ such that $\supp(g')$ is contained in an open interval which contains $E_F$, $g(-\infty)=1$ and $g(\infty)=0$, and such that \eql{\label{eq:edge index well defined}\left[\exp\left(2\pi\ii g(\widehat{H})\right),\Lambda_{V}\right]\in\KK} where $\KK$ is the ideal of compact operators, $\Lambda_{V}\equiv\chi_{V}(X)$ with $X$ the position operator on $\ell^2(\G)$, and $\chi$ is the characteristic function.
\end{definition}
The typical scenario usually considered in the literature (prior to \cite{DZ23,DZ24}) is truncation of a bulk system to the upper half plane and $V$ is taken as the upper right quadrant. It is well-known \cite{EG02,FSSWY20} that if one takes a spectrally gapped bulk system(s) and truncates as in \Cref{def:DW edge Hamiltonian} then \eqref{eq:edge index well defined} holds. In what follows we will be interested in deriving geometric conditions on the truncation shape which yield \eqref{eq:edge index well defined}.

The discussion in \eqref{subsubsec:bulk topology} has analogs for edge systems $\widehat{H}$ which obey \Cref{def:edge system} with $V\subset \R^2$. Indeed, the analogous index formula \cite{KS02} is \eql{\label{eq:edge index formula} \widehat{\NN}(\widehat{H},V) = \findex\left(\bbLambda_V \exp\left(2\pi\ii g(\widehat{H})\right)\right)\,.} There is also an analogous trace-formula which more readily corresponds to conductivity \cite[Eq-n (1.7)]{EGS05}:
\eql{\label{eq:edge trace formula}
\widehat{\sigma^V_{\mathrm{Hall}}}\left(\widehat{H}\right) = \lctr\left(g'\left(\widehat{H}\right)\ii\left[\Lambda_V,\widehat{H}\right]\right)
} where $g$ has the same meaning as above. The equality $2\pi\widehat{\sigma_{\mathrm{Hall}}}\left(\widehat{H}\right) = \widehat{\NN}(\widehat{H},V)$ has been first shown in \cite{KS02} for straight edges, but the same proof essentially follows in our case as we show below in \Cref{prop:edge index thm}.

It is clear, following \cite{BW22}, that 
\begin{theorem}[Bols-Werner]\label{thm:BW}
    If $\widehat{H}$ obeys \Cref{def:edge system} (with compact replaced by trace class) for some $V\subset\R^2$ and $\widehat{\NN}(\widehat{H},V)\neq 0$ then there exists some $\varepsilon>0$ such that \eql{\label{eq:ac spectrum} \left(E_F-\varepsilon,E_F+\varepsilon\right)\subset\spec_{\mathrm{ac}}(\widehat{H})\,.} 
\end{theorem}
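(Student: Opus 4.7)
The plan is to follow the Bols--Werner argument of \cite{BW22}, which rests on two ingredients: (i) the identification of the Fredholm edge index $\widehat{\NN}(\widehat{H},V)$ with a trace formula for the edge conductance, and (ii) a result of \cite{ABJ20} saying that purely singular spectrum forces this trace to vanish.

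First, one needs the equality
\eq{
\widehat{\NN}(\widehat{H},V) \ = \ 2\pi\, \lctr\!\big(g'(\widehat{H})\, \ii[\Lambda_V, \widehat{H}]\big)
}
valid for any admissible $g$ in \Cref{def:edge system}, once the compatibility in \eqref{eq:edge index well defined} is strengthened from compact to trace class. On a straight edge this is \cite{KS02}; the transposition to the abstract setting is an algebraic manipulation of $\findex\big(\Lambda_V \exp(2\pi \ii g(\widehat{H}))\big)$ using the Fedosov-type trace formula, standard functional calculus, and a homotopy showing the trace is independent of the choice of $g$. This is what \Cref{prop:edge index thm} will provide.

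Second, one invokes the key analytic lemma from \cite{ABJ20}: whenever $\widehat{H}$ has purely singular spectrum on an open interval $J$ and $g$ is admissible with $\supp(g')\subset J$, one has
\eq{
\lctr\!\big(g'(\widehat{H})\, \ii[\Lambda_V, \widehat{H}]\big)\ =\ 0.
}
Heuristically this is a trace-form of the RAGE theorem: singular states carry no asymptotic current across $\partial V$. This is the analytically delicate heart of the argument and the only point where absolute continuity (as opposed to mere presence of spectrum) enters.

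Third, combine the two by contradiction. Fix any $E'$ in the bulk gap close to $E_F$ and assume $\widehat{H}$ has purely singular spectrum on a neighborhood $J\ni E'$. Replace the $g$ of \Cref{def:edge system} by a modified $\tilde g:\R\to[0,1]$ with $\tilde g(-\infty)=1$, $\tilde g(+\infty)=0$ and $\supp(\tilde g')\subset J$. Since the bulk Fermi projection is locally constant across the gap and the trace-class compatibility is preserved along the homotopy $t\mapsto(1-t)g+t\tilde g$, Fredholm index homotopy invariance keeps $\widehat{\NN}(\widehat{H},V)\neq 0$. Step one then forces the trace to be non-zero, while step two forces it to vanish---a contradiction. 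Therefore every neighborhood of $E'$ meets $\spec_{\mathrm{ac}}(\widehat{H})$; since the ac spectrum is closed, $E'\in\spec_{\mathrm{ac}}(\widehat{H})$. Running this for every $E'$ in a sufficiently small interval around $E_F$ yields $(E_F-\varepsilon,E_F+\varepsilon)\subset\spec_{\mathrm{ac}}(\widehat{H})$.
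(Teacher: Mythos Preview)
Your route is considerably more roundabout than the paper's, and Step 3 has a genuine gap in the abstract setting of the theorem.

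The paper's argument is a one-liner: apply \cite[Theorem 2.1(3)]{ABJ20} directly to the unitary $W=\exp(2\pi\ii g(\widehat{H}))$ and the projection $Q=\Lambda_V$. Since $[W,Q]$ is trace-class and $\findex(\bbLambda_V W)\neq0$ by hypothesis, that result gives $\spec_{\mathrm{ac}}(W)=\mathbb{S}^1$. A spectral-mapping argument then transfers this to $\widehat{H}$: the function $\lambda\mapsto\ee^{2\pi\ii g(\lambda)}$ is constant outside $\supp(g')$ and a local diffeomorphism on it, so the absolutely continuous spectrum of $\widehat{H}$ must contain the interval $\{g\in(0,1)\}\ni E_F$. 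No trace formula, no homotopy in $g$, no bulk data are needed.

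Your Step 3 is where the argument actually breaks. The theorem is stated for the \emph{abstract} edge system of \Cref{def:edge system}: one is given a single $g$ with the trace-class commutator property, and nothing more. You replace it by $\tilde g$ with $\supp(\tilde g')\subset J$ and assert that ``the trace-class compatibility is preserved along the homotopy'' by appeal to the bulk Fermi projection being locally constant across a gap. But there is no bulk Hamiltonian, no Fermi projection, and no spectral gap in the hypotheses---only the existence of that one $g$. Nothing in the abstract setup forces $[\exp(2\pi\ii\tilde g(\widehat{H})),\Lambda_V]$ to be trace-class, so the index $\widehat{\NN}(\widehat{H},V)$ computed with $\tilde g$ need not even be defined, let alone equal to the original one. (In the concrete interface setting of \Cref{def:DW edge Hamiltonian} this \emph{is} true for all $g$ supported in the bulk gap, which is presumably why the argument felt safe; but that is extra structure not assumed here.)

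A smaller point: what \cite{ABJ20} actually supplies is the unitary statement above, not a lemma of the form ``purely singular spectrum on $J$ $\Rightarrow$ the current trace vanishes.'' The latter would have to be derived from the former together with your Step~1, so presenting it as an independent input from \cite{ABJ20} is a misattribution. Once you see this, the detour through the trace formula and the contradiction argument becomes unnecessary: the unitary statement already delivers the whole interval of ac spectrum in one stroke.
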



This theorem relies on \cite[Theorem 2.1 3.]{ABJ20}, which states that if $W$ is a unitary and $Q$ is a projection so that $[W,Q]$ is trace-class (so $QWQ+Q^\perp$ is Fredholm) and so that $\findex\left(QWQ+Q^\perp\right)\neq0$ then $\spec_{\mathrm{ac}}(W)=\mathbb{S}^1$; cf. \cite[Theorem 3.1]{BDF_1973}.

With these preliminaries we are ready for the

\begin{proof}[Proof of \Cref{thm:existence of ac spectrum for edge}] We first construct a set $V \subset \R^2$ such that the edge Hamiltonian $\widehat{H}$ from item (b) of \Cref{thm:existence of ac spectrum for edge} has a non-zero edge topological index w.r.t. $V$, i.e., \eqref{eq:edge index formula} is non-zero. It being nonzero relies on two facts: that the two Chern numbers associated with $H_\pm$ are different, as well as a special geometric property of $U$ and $V$. This is explained below in \Cref{prop:existence of transversal set}. Our proof of this relies on the framework developed in \cite{DZ24} for the bulk-edge correspondence along curved edges. Specifically, we construct $V$ so that (i) $V$ is transverse to $U$ (this essentially says that $\p U$ and $\p V$ get further away from each other at a sufficiently fast rate) and (ii) $\p V$ is the range of a simple curve $\gamma : \R \rightarrow \R^2$ that starts in $U^c$ and ends in $U$. Under these conditions, \cite[Theorem 1]{DZ24} predicts that
\eql{\label{eq:BEC}
    \widehat{\NN}\left(\widehat{H},V\right) = 2\pi\chi_{U,V}\left( \sigma_{\mathrm{Hall}}(H_+)-\sigma_{\mathrm{Hall}}(H_-)\right)
} for some $\chi_{U,V}$ which depends on the geometry of $U$ and $V$ but not on $H_\pm$, and $V$ is precisely constructed so $\chi_{U,V}$ is non-zero.

Hence the right hand side of \eqref{eq:BEC} is non-zero by hypothesis (a) of \Cref{thm:existence of ac spectrum for edge}, and hence by \Cref{thm:BW} we have \eqref{eq:ac spectrum}.
\end{proof}

\begin{rmk}
    In fact \cite[Theorem 1]{DZ24} does not quite show \eqref{eq:BEC}, but rather it shows \eql{\label{eq:BEC-trace} \widehat{\sigma^V_{\mathrm{Hall}}}\left(\widehat{H}\right) = \chi_{U,V} \left(\sigma_{\mathrm{Hall}}(H_+)-\sigma_{\mathrm{Hall}}(H_-)\right)\,.
} To bridge the gap we provide two alternatives here. In the first, we show a curved edge index theorem: \eql{\label{eq:curved edge index thm}2\pi\widehat{\sigma^V_{\mathrm{Hall}}}\left(\widehat{H}\right) = \widehat{\NN}\left(\widehat{H},V\right)} so that we may invoke \cite[Theorem 1]{DZ24}. This is done in \Cref{prop:edge index thm} below.

In the latter scheme, we avoid $\widehat{\sigma^V_{\mathrm{Hall}}}\left(\widehat{H}\right)$ all together by proving an index theorem for the bulk Hall conductivity first (this is \Cref{lem:geometric inex thm}) and following a similar route for the bulk-edge correspondence as in \cite{FSSWY20}. This is done in \Cref{thm:geometric index BEC} below.
\end{rmk}

The rest of this paper (after some remarks) is devoted to the proof of \Cref{prop:existence of transversal set} as well as the two alternatives described above.

\subsection{Remarks}  While our main result proves that the bulk gap is filled with ac-spectrum, it cannot exclude superposition with other spectral types. For instance, if $\widehat{H}$ is an edge Hamiltonian whose associated bulk is gapped at $E_F\in\RR$ and $\varphi \in \ell^2(\Z^2,\C^m)$ then 
\begin{equation}
    \widehat{H} + E_F \varphi \otimes \varphi^\ast
\end{equation}
is also an edge Hamiltonian, however $E_F \in \spec_{\mathrm{pp}}(\widehat{H}) $. Previous approaches via Mourre estimates would fail here since they derive \emph{pure} ac-spectrum.


\Cref{thm:existence of ac spectrum for edge} can be seen as the final step of the program started in \cite{DZ23}, where some of us showed (under a weaker assumption on $U$) that $(E_F-\varepsilon,E_F+\varepsilon)\cap\spec(\widehat{H})\neq\varnothing$ for some $\varepsilon>0$. The proofs, however, are quite different: \cite{DZ23} uses mainly the local character of the bulk index while the current work relies on stronger results: the bulk-edge correspondence \cite{DZ24} and the spectrum of unitary whose index with a projector is non-zero \cite{ABJ20}.


\subsection{Open problems}
\subsubsection{Big disks versus big parabolas} The condition given here in \Cref{def:1} for parabolic regions vs. the one given in \cite{DZ23} makes one wonder what the spectral type is when $U$ and $U^c$ contain arbitrarily large disks.

\subsubsection{The spectral type of the edge system in the mobility gap regime}
A much more delicate question, not adressed here, arises in the scenario where the bulk system is an insulator due to Anderson localization (in which case the spectral gap closes but a dynamical so-called \emph{mobility gap} arises \cite{EGS05}). In this case, the bulk Hamiltonian already has its gap filled with Anderson localized states and it is not entirely clear what would be the resulting spectral type of the associated edge system: do the bulk Anderson-localized states become resonances embedded within the absolutely continuous spectrum? This question is perpendicular to the present study.
\subsubsection{The Fu-Kane-Mele index}
For systems of the Fermionic time-reversal invariant class, i.e., class AII in the Altland-Zirnbauer classification, the Chern number is always zero. This happens when there is an anti-unitary operator (the time-reversal operator) $\Theta$ which has the property that $\Theta^2=-\One$. A Hamiltonian $H$ is then termed time-reversal symmetric iff $[H,\Theta]=0$, in which case it is easy to show that $\NN(H) = 0$. However, for such systems one has the Fu-Kane-Mele $\Z_2$ index \cite{Kane_Mele_2005,Fu_Kane_2007}. For systems without translation-invariance it is most conveniently phrased via the Atiyah-Singer $\Z_2$-valued half Fredholm index (\cite{Atiyah1969,SB_2015,FSSWY20}). In short, for Fredholm operators which obey $F = -\Theta F^\ast \Theta$ (we shall call them $\Theta$-odd) we always have $\findex(F) = 0$, but we may still define \eql{\label{eq:Atiyah-Singer half Fredholm} \findex_2(F) := \left[\left(\dim \ker F\right)\mod 2\right] \in\Z_2} and it is a fact that this quantity is norm-continuous and compactly-stable under perturbations which respect the $\Theta$-odd constraint. It thus turns out that if $[\Theta,H]=0$ then $F := \Pp U$ is $\Theta$-odd. Similarly also $\bbLambda_1 \exp\left(2\pi\ii\Lambda_2 P \Lambda_2\right)$ is. Hence the Fu-Kane-Mele index is given \cite{SB_2015} by \eql{\label{eq:FKM index} \NN_2(H) = \findex_2(\Pp U) = \findex_2(\bbLambda_1 \exp\left(2\pi\ii\Lambda_2 P \Lambda_2\right))\,.} We are still unaware of trace-formulas or linear response theory for the $\Z_2$ Fu-Kane-Mele index, but see \cite[Section 6.3]{Bols2023}.

It is thus natural to ask whether a non-zero bulk Fu-Kane-Mele index implies ac-spectrum for the edge with a curved boundary, especially given the recent result in \cite{BC24}. We postpone the resolution of this question to future work.


\subsection{Acknowledgement} We gratefully acknowledge support from the National Science Foundation DMS 2054589 (AD) and the Pacific Institute for the Mathematical Sciences (XZ). The contents of this work are solely the responsibility of the authors and do not necessarily represent the official views of PIMS.

\section{Curved boundaries and the geometric Hall conductance}
The following definition is taken from \cite[Definition 4]{DZ24}:
\begin{definition}[Transverse sets]\label{def:transverse}
    We say that two sets $U,V \subset \R^2$ are transverse if
\eql{\label{eq-0a}
    \liminf_{\norm{x}\to \infty} \frac{\log \PUV(x)}{\log \norm{x}} >0, \qquad \PUV(x) \de 1 + d(x,\p U) + d(x,\p V)\qquad(x\in\R^2).
}
\end{definition}
We note that in \cite{DZ24} we expressed transversality using the $\|\cdot\|_1$-norm, but because norms on $\R^2$ are all equivalent, we could as well have used the Euclidean distance.

For tranverse sets $U$ and $V$, we introduced in \cite{DZ24} an integer $\chi_{U,V}$ that, roughly speaking, computes how many times $\p U$ (oriented so that $U$ lies to the left of $\p U$) enters $V$; see \eqref{fig:5}. We will give an intrinsically geometric definition of $\chi_{U,V}$ shortly, but first we want to provide context for how it arises in the setting of calculating the Hall conductance of the integer quantum Hall effect.

The formula \eqref{eq:double-commutator-formula} corresponds to the linear response of the system response of the system in the following manner: one turns on an electric field along the $x_1$ axis and measures current along the $x_2$ axis. What if instead we turn on an electric field along the curve $\partial U$ and measure the current along the curve $\partial V$? Since the two are transversal as in the above definition, it stands to reason to define the Hall conductivity corresponding to this experiment as
\eql{\label{eq:geometric Hall conductance}
    \sigma_{\mathrm{Hall}}^{U,V}(H) \equiv  \ii \lctr\left(P [[\Lambda_U,P],[\Lambda_V,P]]\right)
} 
where $\Lambda_S \equiv \chi_S(X)$ for any set $S\subset\R^2$ with $X$ the position operator and $\chi$ the characteristic function.

It turns out that \eqref{eq:double-commutator-formula} and \eqref{eq:geometric Hall conductance} are not always equal. Consider the simple example that $U$ is not the upper half plane but rather a horizontal strip of finite width with $V$ still the right half plane. Then we expect two opposite-direction currents on the two components of $\partial U$ which cancel so that in that case \eqref{eq:geometric Hall conductance} should yield zero. It turns out that to relate \eqref{eq:geometric Hall conductance} and \eqref{eq:double-commutator-formula} one uses $\chi_{U,V}$, the times $\partial U$ intersects $\partial V$. Indeed, in \cite[Theorem 3]{DZ24} it was shown that
\eql{\label{eq:magic formula}
    \sigma_{\mathrm{Hall}}^{U,V}(H)  = \chi_{U,V}  \cdot\sigma_{\mathrm{Hall}}^{1,2}(H) \,,
}
when $U, V$ are transversal. Though unsatisfactory, one \emph{could} take \eqref{eq:magic formula} as one possible definition for $\chi_{U,V}$. To do so it is comforting to have a-priorily \Cref{lem:double commutator is trace-class} below which allows us to conclude that $\sigma_{\mathrm{Hall}}^{U,V}(H)$ is well-defined as soon as $U$ and $V$ are transverse.

Let us pause for a moment on the geometric definition of $\chi_{U,V}$, which we will use below, based on the following observations: 
\begin{itemize}
    \item[(i)] There exist a set $\UU \subset \R^2$ transverse to $V$ such that $U \cap \Z^2 = \UU \cap \Z^2$, whose unbounded boundary components are the ranges of countably many proper simple curves $\gamma_j:\R \rightarrow \R^2$, such that $\UU$ lies to the left of $\gamma_j$ -- see \cite[Lemma 7.1]{DZ24}.
    \item[(ii)] Second, using that $\UU$ and $V$ are transverse, the limits 
    \begin{equation}
        \lim_{t \rightarrow +\infty} \big(\Lambda_V \circ \gamma_j(t) - \Lambda_V \circ \gamma_j(-t) \big)
    \end{equation}
    are all well-defined. They are non-zero for finitely many $j$; see \cite[Lemma 7.5-7.7]{DZ24}. 
    \item[(iii)] The sum 
    \begin{equation}
        \chi_{U,V} \de \sum_j \lim_{t \rightarrow +\infty} \big(\Lambda_V \circ \gamma_j(t) - \Lambda_V \circ \gamma_j(-t) \big)
    \end{equation}
    is finite and independent of the set $\UU$ from (a). This is a consequence of $\Lambda_U = \Lambda_\UU$ and of \eqref{eq:magic formula}.    
\end{itemize}

\begin{figure}[b]
     \centering
     \begin{subfigure}[ht]{0.3\textwidth}
         \centering
         \includegraphics[width=\textwidth]{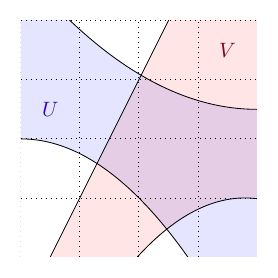}
         \caption{}
         \label{fig: 1(a)}
     \end{subfigure}
     \hfill
     \begin{subfigure}[ht]{0.3\textwidth}
         \centering
         \includegraphics[width=\textwidth]{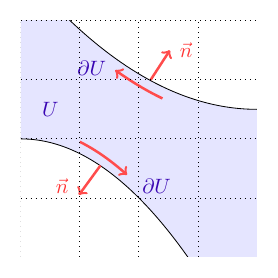}
         \caption{}
         \label{fig: 1(b)}
     \end{subfigure}
     \hfill
     \begin{subfigure}[ht]{0.3\textwidth}
         \centering
         \includegraphics[width=\textwidth]{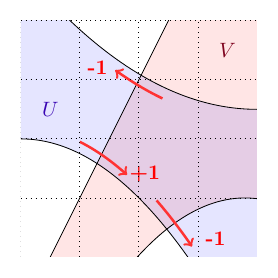}
         \caption{}
         \label{fig: 1(c)}
     \end{subfigure}
        \caption{We define the intersection number $\mathcal X_{U, V}$ between transverse simple sets $U, V$ in two steps. We first orient $\p U$ such that $U$ is to its left according to the outward-pointing normal, see (a) and (b); then we count how many times the oriented $\p U$ enters $V$, see (c). Here $\mathcal X_{U, V} = +1 - 1 - 1 = -1$.}
        \label{fig:5}
\end{figure}

Our main insight here is
\begin{proposition}\label{prop:existence of transversal set} Assume that $U$ and $U^c$ both contain parabolic regions. Then there exists some $V \subset \R^2$ such that $U$ and $V$ are transverse and with which $\chi_{U,V} \neq 0$.
\end{proposition}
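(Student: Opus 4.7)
The plan is to take $V$ as one side of a proper simple curve $\gamma:\R\to\R^2$ whose two ends escape to infinity along the central axes of $\PP_-\subset U^c$ and $\PP_+\subset U$, compute $\chi_{V,U}$ directly from the geometric prescription in (iii), and transfer the answer to $\chi_{U,V}$ via antisymmetry of the double commutator in \eqref{eq:geometric Hall conductance}.

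\textbf{Construction and transversality.} Writing each $\PP_\pm$ in canonical form $\{M_\pm y:\ y_1\ge 0,\ |y_2|\le y_1^{\alpha_\pm}\}$, I let $\gamma(t)$ parameterize the axis of $\PP_+$ for $t$ large positive and of $\PP_-$ for $t$ large negative, with a smooth injective interpolation on a bounded interval. Since a proper simple arc separates $\R^2$, define $V$ as the component of $\R^2\setminus\gamma(\R)$ lying to the left of $\gamma$. For transversality, fix $\|x\|$ large. Either $d(x,\gamma)\ge\|x\|^{\delta}$ for a suitably small $\delta>0$, in which case $d(x,\partial V)$ itself is polynomially large, or else the nearest point on $\gamma$ has parameter $|t|$ comparable to $\|x\|$ and sits on the axis of one of the two parabolas. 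In the latter case, the parabolic width at that point is of order $\|x\|^{\alpha}$ for the relevant exponent $\alpha\in\{\alpha_+,\alpha_-\}$, so $x$ lies inside $\PP_\pm$; since $\PP_+\subset U$ and $\PP_-\subset U^c$, one obtains $d(x,\partial U)\ge d(x,\partial\PP_\pm)\gtrsim\|x\|^{\alpha}$. Either way $\PUV(x)$ grows polynomially, yielding \eqref{eq-0a}.

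\textbf{Computing $\chi_{U,V}$.} I apply (iii) to the pair $(V,U)$ with $\VV\equiv V$, admissible by the transversality just established. The unbounded boundary of $V$ reduces to the single curve $\gamma$, with $V$ on its left by construction; and since $\gamma(t)\in\PP_+\subset U$ for $t$ large positive while $\gamma(t)\in\PP_-\subset U^c$ for $t$ large negative,
\[
    \chi_{V,U}=\lim_{t\to+\infty}\Lambda_U(\gamma(t))-\lim_{t\to-\infty}\Lambda_U(\gamma(t))=1-0=1.
\]
The double commutator in \eqref{eq:geometric Hall conductance} is manifestly antisymmetric under $(U,V)\mapsto(V,U)$, so \eqref{eq:magic formula} applied to any gapped local Hamiltonian $H$ with $\sigma_{\mathrm{Hall}}^{1,2}(H)\ne 0$ (e.g.\ a Haldane model) yields $\chi_{U,V}=-\chi_{V,U}=-1\ne 0$.

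\textbf{Main obstacle.} The essential role of the parabolic hypothesis is in the transversality estimate: points near $\partial V$ must be forced polynomially far from $\partial U$, and the growing width $y_1^{\alpha}$ is precisely what delivers this; weakening the hypothesis to merely containing arbitrarily large balls, as in \cite{DZ23}, would not suffice. The conceptual shortcut is to work with $\chi_{V,U}$ rather than $\chi_{U,V}$: we have no direct handle on $\partial U$ beyond the separation $\PP_-\subset U^c,\ \PP_+\subset U$, whereas the boundary of our constructed $V$ is a single simple proper curve, making the sum in (iii) a one-term expression and the antisymmetry of the double-commutator formula the natural bridge.
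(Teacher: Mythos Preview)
Your proposal is correct and follows essentially the same route as the paper: the curve $\gamma$ along the two parabola axes, the set $V$ to its left, the computation $\chi_{V,U}=1$ from (iii), and the passage to $\chi_{U,V}=-1$ via antisymmetry of the double commutator in \eqref{eq:geometric Hall conductance} all match the paper's Step~1 verbatim. Your transversality sketch is the paper's Steps~2--5 in compressed form; there the dichotomy is made quantitative by introducing the half-width regions $\QQ_\pm=\{M_\pm y:\ 2|y_2|\le y_1^{\alpha}\}$ (after normalizing $\alpha_+=\alpha_-\le 1/2$) and proving explicit lower bounds $d(x,\Gamma_\pm)\ge 2^{-2-\alpha}\|x\|^{\alpha}$ for $x\notin\QQ_\pm$ and $d(x,\partial U)\ge 2^{-2-\alpha}\|x\|^{\alpha}$ for $x\in\QQ_\pm$.
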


\subsection{A geometric index formula}
It is sometimes convenient to work with indices of Fredholm operators rather than trace formulas, in order to eventually transition to the edge. To do so, we define the analog of the Kitaev index from \eqref{eq:Chern number}, for curved boundaries: 
\eql{
    \NN^{U,V}(H) := \findex\left(\bbLambda_V \exp\left(2\pi\ii\Lambda_U P \Lambda_U\right)\right)\,.
}

This index is well-defined for a similar reason as for the reason that $\sigma_{\mathrm{Hall}}^{U,V}(H)$ is well-defined: it follows from the fact $U$ and $V$ are transversal, as shown in \Cref{lem:geometric index well defined} below.

As a result, thanks to \eqref{eq:magic formula}, we have the analogous identity at the level of indices: \eql{
    \NN^{U,V}(H) = \chi_{U,V} \NN^{1,2}(H)\,.
}
It follows from two index theorems, the first, well known, states that $\NN^{1,2}(H) = 2\pi \sigma_{\mathrm{Hall}}^{1,2}(H)$. Its geometric analog follows a similar proof which is presented in \Cref{sec:geometric index theorem} below.



\section{Existence of a transversal set}\label{sec:existence of transversal set}
\begin{proof}[Proof of \Cref{prop:existence of transversal set}] The proof will be divided into five main steps: 

\textbf{Step 1.} (Definition of $V$) For notation purposes, define $U_+ = U$ and $U_- = U^c$; let $\alpha_\pm > 0$ and $M_\pm$ be rigid motions of $\R^2$ such that
\begin{equation}\label{eq-0b}
   \{ M_\pm y \in \R^2 : \ |y_2| \leq y_1^{\alpha_\pm}, \  y_1 \geq 0\} \subset U_\pm.
\end{equation}
Without loss of generalities, we can assume $\alpha_+ = \alpha_- \in (0,1/2]$ and denote it by $\alpha$: for instance, one can replace $\alpha_\pm$ by $\min(\alpha_+,\alpha_-, 1/2)$ while keeping \eqref{eq-0b} valid.  Let $y_\pm = M_\pm(0)$ and define:
\begin{equation}
    \Gamma_\pm = M_\pm (\R^+ \times \{0\}), \qquad \Gamma_0 = [y_-,y_+] \qquad \Gamma = \Gamma_- \cup \Gamma_0 \cup \Gamma_+.
\end{equation}
Here $[y_-, y_+]$ refers to the segment connecting $y_\pm$. See \Cref{fig:1}.

We now use terminology introduced in \cite[\S5]{DZ24}. Let $\gamma : \R \rightarrow \R^2$ be an arclength parametrization of $\Gamma$ with $\gamma(t) \in \Gamma_-$ for $t$ sufficiently small and $\gamma(t) \in \Gamma_+$ for $t$ sufficiently large. Note that $\gamma$ is a simple path with range $\Gamma$ -- see \cite[Definition 7]{DZ24}. Let $V$ be the connected component of $\R^2 \setminus \Gamma$ that lies to the left of $\gamma$ (see \cite[Proposition 7]{DZ24} for the rigorous definition of ``to the left of a simple path''); it has boundary $\Gamma$. Assuming for now that $U$ and $V$ are transverse, we compute $\chi_{U,V}$. Note that we have $\chi_{U,V} = -\chi_{V,U}$ because the left-hand-side of \eqref{eq:magic formula} changes sign when switching $U$ and $V$. Moreover, $V$ is a simple set, i.e. its boundary is the range of a simple path (see \cite[Definition 8]{DZ24}) and therefore, because $V$ lies to the left of $\gamma$, we have by \cite[Definition 9]{DZ24}:
\begin{equation}
    \chi_{V,U} = \lim_{t \rightarrow +\infty} \Lambda_U\big( \gamma(t) \big) -  \lim_{t \rightarrow -\infty} \Lambda_U\big( \gamma(t) \big) = 1.
\end{equation}
In the last equality, we used that for $t$ large, $\gamma(t) \in \Gamma_+ \subset U_+=U$ and for $t$ small, $\gamma(t) \in \Gamma_- \subset U_- = U^c$, so the two above limits are $1$ and $0$, respectively. Therefore, $\chi_{U,V} = -\chi_{V,U} = -1 \neq 0$.

\begin{figure}[b] 
\caption{\label{fig:1} In blue, the set $U$; in gray, the parabolic regions $\PP_+$ and $\PP_-$; in red, the set $V$ with its boundary $\Gamma$. One notes that $\chi_{U,V} = 1$. } 
    \includegraphics[scale=1]{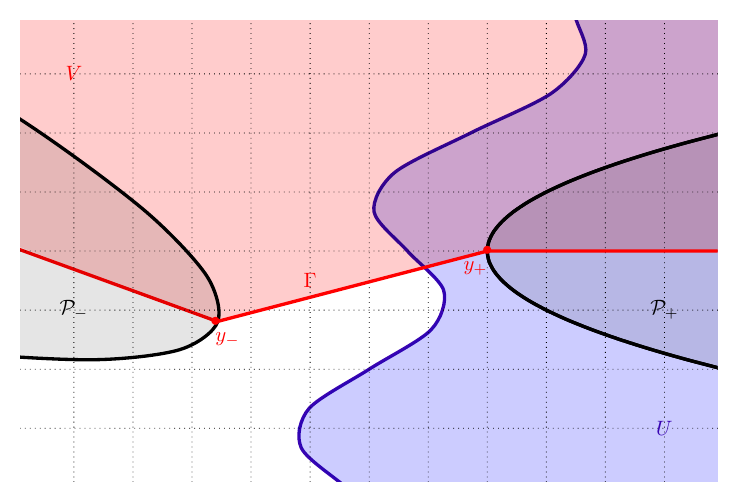}  
\end{figure}

\textbf{Step 2.} (Proof that $U$ and $V$ are transverse) Let $R \geq 5 + 2^{1+1/\alpha}$ such that $\Gamma_0 \subset \Dd_R(0)$. To prove that $U$ and $V$ are transverse, it suffices to prove the following inequality:
\begin{equation}\label{eq:0f}
    |x| \geq 2R \quad \Rightarrow \quad \Psi_{U,V}(x) \geq 2^{-2-\alpha} |x|.
\end{equation}
Let us introduce the sets:
\begin{equation}
    \QQ_\pm = \{ M_\pm y : \ y_1 \geq 0, \ 2|y_2| \leq y_1^\alpha \}.
\end{equation}
In Step 3 we prove the following inequalities:
\begin{align}
\label{eq:0c}   x \notin \Dd_{2R}(0) & \Rightarrow d(x,\Gamma_0) \geq 2^{-2-\alpha} \|x\|, \\
\label{eq:0d}   x \in \QQ_\pm^c \setminus \Dd_{2R}(0) & \Rightarrow d(x,\Gamma_\pm) \geq 2^{-2-\alpha} \|x\|, \\
\label{eq:0e}   x \in \QQ_\pm \setminus \Dd_{2R}(0) & \Rightarrow d(x,\p U) \geq 2^{-2-\alpha} \|x\|. 
\end{align}
Assume for now that these inequalities hold and fix $x \notin \Dd_{2R}(0)$. If $x \notin \QQ_+ \cup \QQ_-$, then by \eqref{eq:0c} and \eqref{eq:0d}:
\begin{equation}
    \Psi_{U,V}(x) \geq d(x,\p V) = \min_{\epsilon \in \{0,+,-\}} d(x,\Gamma_\epsilon) \geq 2^{-2-\alpha} \|x\|
\end{equation}
If $x \in \QQ_\pm$, then by \eqref{eq:0e}:
\begin{equation}
     \Psi_{U,V}(x) \geq d(x,\p U) \geq 2^{-2-\alpha} \|x\|
\end{equation}
This implies \eqref{eq:0f}.

\textbf{Step 3.} We prove \eqref{eq:0c}. Assume that $x \notin \Dd_{2R}(0)$. Then
\begin{equation}
    d(x,\Gamma_0) \geq d(x,\Dd_R(0)) = |x|-R \geq \dfrac{|x|}{2} \geq 2^{-2-\alpha} \|x\|
\end{equation}

\textbf{Step 4.} We prove \eqref{eq:0d}. Assume that $x \in \QQ^c_\pm \setminus \Dd_{2R}(0)$. Let $y \in \R^2$ with $M_\pm y = x$; 
remark for use below that
\begin{align}
  \dfrac{\| x \|}{2} + R \leq \| x \| = \| M_+ y \| \leq \| y \| + \| y_+ \| \leq \| y \| + R, \\
  2\| x\| -R \leq \| x \| = \| M_+ y \| \geq \| y \| - \| y_+ \| \geq \| y \| - R,
\end{align}
so $\|x \|/2 \leq \| y \| \leq 2\| x\|$. We have
\begin{equation}\label{eq:0g}
    d(x,\Gamma_\pm) = d(M_\pm y, M_\pm (\R^+ \times \{0\})) = d(y,\R_\pm \times \{0\}) \geq d(y,\R \times \{0\}) = |y_2|. 
\end{equation}
Moreover, because $x \notin \QQ_\pm$, we have $2|y_2| \geq |y_1|^\alpha$. In particular, $|y_2| \geq 1$: otherwise we would have $|y_1| \leq 2^{1/\alpha}$, so $\| y \| \leq 1 + 2^{1/\alpha}$, $\| x \| \leq 2 + 2^{1+1/\alpha}$, which is incompatible with $R \geq 5+2^{1+1/\alpha}$. From $2|y_2| \geq |y_1|^\alpha$, $|y_2| \geq 1$, and the inequality $a^p+b^p \geq (a+b)^p$, valid for $p = 2\alpha \leq 1$, we obtain:
\begin{equation}
    4|y_2| \geq 2|y_2|+|y_1|^\alpha \geq |y_2|^\alpha + |y_1|^\alpha \geq \| y\|^{\alpha}.
\end{equation}
Returning to \eqref{eq:0g}, we conclude that
\begin{equation}
    d(x,\Gamma_\pm) \geq \dfrac{\| y\|^{\alpha}}{4} \geq 2^{-2-\alpha} \| y\|^{\alpha}.
\end{equation}

\textbf{Step 5.} We finally prove \eqref{eq:0e}. Assume now that $x \in \QQ_\pm \setminus \Dd_{2R}(0)$; let $x_U \in \p U$ such that $\| x-x_U \| = d(x, \p U)$. We claim first that $[x,x_U]$ intersects $\p \PP_\pm$. Indeed if it did not, then because $x \in \QQ_\pm \subset \PP_\pm$, $x_U$ would be in $\PP_\pm$, which does not intersect $\p U$: contradiction. Let $x_0 \in [x,x_U] \cap \PP_\pm$. We have:
\begin{equation}
    d(x,\p U) = \| x-x_U\| \geq \| x-x_0 \| \geq d(x,\p \PP_\pm) = d(x,\PP_\pm^c).
\end{equation}
Let $x_* \in \p \PP_\pm$ with $d(x,\p \PP_\pm) = \| x-x_*\|$; let $y, y_*$ with $M_+ y = x$, $M_+ y_* = x_*$. 

Let $\CC_\pm$ be the region depicted in \eqref{fig:2}. Because $M_\pm^{-1} \PP_\pm$ is convex and the extremal points of $\CC_\pm$ are in $\PP_\pm$, we have $\CC_\pm \subset \MM_\pm^{-1} \PP_\pm$. It follows that
\begin{equation}
    d(x,\PP_\pm^c) = d(y,M_\pm^{-1} \PP_\pm^c) \geq d(y, \CC_\pm^c) = d(y,\p \CC_\pm).
\end{equation}
where in the last line we used that $y \notin \CC_\pm^c$.

\begin{figure}[t] 
\floatbox[{\capbeside\thisfloatsetup{capbesideposition={right,center},capbesidewidth=0.55\textwidth}}]{figure}[\FBwidth]
    {\caption{\label{fig:2} In black, the parabolic region $M_+^{-1}\PP_+$; in blue, the set $\CC_\pm$. We note that $d(x,\p U) = d(y,M_+^{-1}\PP_+) \geq d(y,\CC_\pm) = c$. }} 
    {\includegraphics[width=1\linewidth]{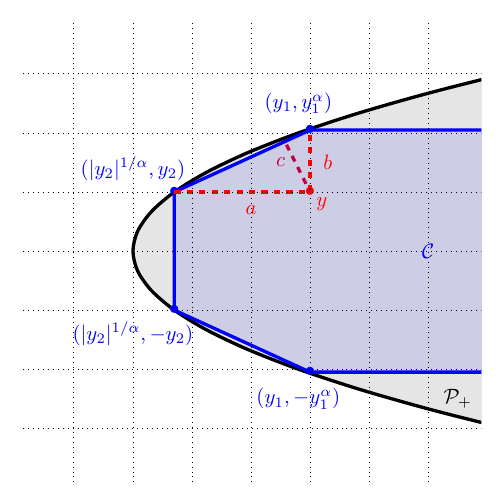} }
\end{figure}

With the notations of \eqref{fig:2}, we moreover have:
\begin{equation}
    d(y,\p \CC_\pm) = c = \dfrac{ab}{\sqrt{a^2+b^2}}, \qquad a = y_1-|y_2|^{1/\alpha}, \ b = y_1^\alpha - |y_2|,
\end{equation}
where we obtained the formula for $c$ via an area argument. Now because $x \in \QQ_\pm$, we have $y_1^\alpha \geq 2|y_2|$. So, $2b = 2y_1^\alpha - 2|y_2| \geq y_1^\alpha$ and $a \geq (1-2^{-1/\alpha}) y_1$. Moreover $a \leq y_1$ and $b \leq y_1^\alpha$. 
\begin{equation}\label{eq:0i}
    2c \geq (1-2^{-1/\alpha}) \dfrac{y_1^{1+\alpha}}{\sqrt{y_1^2 + y_1^{2\alpha}}}.
    \end{equation}
Remark moreover that $y_1 \geq 1$: otherwise we would have $2|y_2| \leq 1$ and hence $\| y \| \leq 2$, $\|x \| \leq 4$ which is incompatible with $\| x \| \geq R \geq 5$. It follows that $y_1 \geq \| y\|/2$:
\begin{equation}
    \| y \|^2 = y_1^2 + y_2^2 \leq y_1^2 + \dfrac{y_1^{2\alpha}}{4} \leq 4 y_1^2,
\end{equation}
Plugging these two inequalities in \eqref{eq:0i}, we obtain
\begin{equation}
    2c \geq (1-2^{-1/\alpha}) \dfrac{y_1^{1+\alpha}}{\sqrt{y_1^2 + y_1^{2\alpha}}} \geq \dfrac{1-2^{-1/\alpha}}{\sqrt{2}} y_1^\alpha \geq \dfrac{1}{4} \| y\|^\alpha \geq  2^{-2-\alpha} \|x\|.
\end{equation}
This completes the proof of \eqref{eq:0e}, and hence of \Cref{prop:existence of transversal set}. \end{proof}


\section{A curved edge index theorem}
In this section we prove \eqref{eq:curved edge index thm}.

\begin{proposition}\label{prop:edge index thm} The commutator $[e^{2i\pi \rho(\widehat{H})},\Lambda_V]$ is trace-class and
\begin{equation}\label{eq:0r}
    2\pi i \cdot \widehat{\sigma_\mathrm{Hall}^{V}}(\widehat{H}) = 
    \findex\big( \bbLambda_V \ee^{2\ii\pi \rho(\widehat{H})}  \big).
\end{equation}
\end{proposition}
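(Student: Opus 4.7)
My plan is to mirror the straight-edge proof of the Hall-conductance index theorem (e.g.\ Kellendonk--Schulz-Baldes \cite{KS02}, Elbau--Graf \cite{EG02}), with the adaptations required by the curved nature of $\p V$. The argument breaks naturally into three stages: a trace-class estimate, a Fedosov-type reduction of the index to a single trace, and a Duhamel-plus-Helffer--Sj\"ostrand computation matching that trace with the Kubo formula \eqref{eq:edge trace formula}.

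\textbf{Step 1 (Trace class).} Set $W := \ee^{2\ii\pi\rho(\widehat H)}$. Since $\rho'$ is compactly supported and $\widehat H$ is bounded and local, Helffer--Sj\"ostrand yields super-polynomial decay of the off-diagonal matrix elements of $W - \One$: for every $N$, $\norm{(W-\One)_{xy}} \leq C_N(1+\norm{x-y})^{-N}$. The commutator $[W,\Lambda_V] = [W-\One,\Lambda_V]$ has kernel $(W-\One)_{xy}(\Lambda_V(y) - \Lambda_V(x))$, which vanishes unless $x,y$ lie on opposite sides of $\p V$. Combined with the fact that $\p V$ is a one-dimensional curve of at most linear growth in $\R^2$ (as built in the proof of \Cref{prop:existence of transversal set}), a factorization of $[W-\One,\Lambda_V]$ through two Hilbert--Schmidt pieces (one carrying the straddling constraint, the other carrying the kernel decay) closes the trace-class estimate.

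\textbf{Steps 2--3 (Reduction to Kubo).} Set $U := \bbLambda_V W$. Unitarity of $W$ gives $\One - U^*U = \Lambda_V W^*[\Lambda_V,W]\Lambda_V$ and $\One - UU^* = \Lambda_V W[\Lambda_V,W^*]\Lambda_V$, both trace class by Step 1. Fedosov's formula together with a cyclic rearrangement collapses $\findex(U)$ to a trace of the form $\Tr(W^*[W,\Lambda_V])$, up to sign. Applying Duhamel to $W$,
\[
[W,\Lambda_V] \;=\; 2\ii\pi\int_0^1 \ee^{2\ii\pi s\rho(\widehat H)}\,[\rho(\widehat H),\Lambda_V]\,\ee^{2\ii\pi(1-s)\rho(\widehat H)}\,ds,
\]
then multiplying by $W^*$ and using cyclicity and unitarity of $\ee^{2\ii\pi s\rho(\widehat H)}$ under the trace yields $\Tr(W^*[W,\Lambda_V]) = 2\ii\pi\,\Tr([\rho(\widehat H),\Lambda_V])$. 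Finally, the Helffer--Sj\"ostrand representation of $\rho(\widehat H)$, the resolvent identity $[R(z),\Lambda_V] = R(z)[\widehat H,\Lambda_V]R(z)$, cyclicity, $R(z)^2 = -\p_z R(z)$, and an integration by parts in $z$ convert $\Tr([\rho(\widehat H),\Lambda_V])$ into $\Tr(\rho'(\widehat H)[\widehat H,\Lambda_V])$, matching the right-hand side of \eqref{eq:0r}.

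The main obstacle is the trace-class bookkeeping: unlike the straight-edge setting, I cannot appeal to translation invariance along $\p V$ or a partial Fourier transform, and each trace/integral swap in Steps 2--3 must be justified via the Step 1 decay estimates combined with the polynomial growth of $\p V$. Once those are in place, the algebraic manipulations are essentially identical to the straight-edge case; the geometry of $\p V$ enters only through its tame growth, which is cheap compared with the super-polynomial decay of $(W-\One)_{xy}$.
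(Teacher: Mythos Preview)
Your algebraic outline (Fedosov/ASS formula, then a Duhamel/Helffer--Sj\"ostrand reduction with an integration by parts) is the right skeleton and is essentially what the paper does in its Step~2. The gap is in the analysis, and it already appears in your Step~1.

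Off-diagonal decay of $W-\One$ together with the straddling constraint only confines the kernel of $[W,\Lambda_V]$ to a tube around $\p V$. Since $\p V$ is an unbounded curve this is not even enough for Hilbert--Schmidt: already for a straight line $\p V=\{x_1=0\}$ the resulting operator is essentially one-dimensional along $x_2$ with no decay in that direction, and no ``factor through two HS pieces'' can manufacture the missing summability. What is missing is a \emph{second} confinement: the function $g(\lambda):=\ee^{2\ii\pi\rho(\lambda)}-1$ is supported in the bulk gap $\GG$ (because $\rho\in\{0,1\}$ outside $\GG$), and since $\widehat H$ agrees with a gapped bulk away from $\p U$, the kernel of $g(\widehat H)$ decays super-polynomially away from $\p U$ (this is \cite[(3.10)]{DZ24}). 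Combining this $\p U$-decay with the $\p V$-confinement from the commutator, and then invoking the \emph{transversality} of $U$ and $V$, produces diagonal decay like $\Psi_{U,V}(x)^{-N}$, which is summable. That is exactly how the paper closes the trace-class estimate; ``$\p V$ has linear growth'' plays no role.

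The same omission breaks your Steps~2--3. The operator $[\rho(\widehat H),\Lambda_V]$ is \emph{not} trace class ($\rho$ is not gap-supported), so the identity $\Tr\big(W^*[W,\Lambda_V]\big)=2\ii\pi\,\Tr\big([\rho(\widehat H),\Lambda_V]\big)$ is ill-defined, and the cyclicity moves inside the Duhamel integral are unjustified. The paper's remedy is to insert, before any manipulation, a cutoff $\chi(\widehat H)$ with $\chi\in C_c^\infty(\GG)$ equal to $1$ on $\supp(g')$; every intermediate operator then carries a gap-supported factor, hence $\p U$-decay, hence trace class by transversality. After the resolvent identity and integration by parts one lands on $\Tr\big(\chi(\widehat H)g'(\widehat H)[\widehat H,\Lambda_V]\big)$, and only at the very end uses $\chi g'=g'$ and $\ee^{-2\ii\pi\rho}g'=2\ii\pi\rho'$. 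Your argument becomes correct once you systematically carry such a $C_c^\infty(\GG)$ factor through the calculation; without it, none of the traces after Step~1 are defined.
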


\begin{proof} In the proof, we will use the following result, which corresponds to \cite[(3.10)]{DZ24}: if $f \in C_0^\infty(\GG)$, then for any $N \in \N$, there exists $C_N > 0$ such that 
\begin{equation}\label{eq:0t}
   \big| f(\widehat{H})(x,y) \big| \leq C_N e^{-N d_\ell(x,y) - N d_\ell(y,\p U) - N d_\ell(x, \p U)}, \qquad x,y \in \Z^2.
\end{equation}
In \eqref{eq:0t}, $f(\widehat{H})(x,y)$ denotes the kernel of $f(\widehat{H})$ at $(x,y)$ and $d_\ell$ refers to the logarithmic distance: $d_\ell(x,y) = \log (1+\|x-y\|)$.

\textbf{1.} Let $g(\lambda) = e^{2i\pi \rho(\lambda)} - 1$, which is a smooth function with support in the bulk spectral gap $\GG$. We have
\begin{equation}
    [e^{2i\pi \rho(\widehat{H})},\Lambda_V]  = [g(\widehat{H}),\Lambda_V].
\end{equation}
Because $g$ is smooth and supported in $\GG$, it satisfies the decay bound \eqref{eq:0t}. By \cite[(2.4)]{DZ24}, when $N$ is large enough, 
\[
    \left|[g(\widehat{H}), \Lambda_V](x,y)\right|\leq C_{3N} e^{-Nd_l(x, y) -Nd_l(x, \p V) - Nd_l(y, \p V) - Nd_l(x,\p U) - Nd_l(y, \p V)}.
\]
Since $(U, V)$ are transversal, by \cite[Corollary 2.4]{DZ24}, $\left[g(\widehat(H)), \Lambda_V\right]$ is trace class. 

\textbf{2.} We now focus on proving the index equality \eqref{eq:0r}. The original proof goes to \cite[Theorem 3.1]{KS02}; alternatively, we present here a direct approach, that relies on
\begin{align}
    \findex(\bbLambda_V \ee^{2i\pi \rho(\widehat{H})}) & = \Tr \big( e^{-2i\pi \rho(\widehat{H})} [g(\widehat{H}),\Lambda_V] \big) 
    \\
    &= \Tr \big( e^{-2i\pi \rho(\widehat{H})} g'(\widehat{H}) [\widehat{H},\Lambda_V] \big)
    = 2i\pi \cdot \Tr \big( \rho'(\widehat{H}) [\widehat{H},\Lambda_V] \big) = 2i\pi \cdot \sigma_e^{U,V}(\widehat{H}).
\end{align}
The first inequality follows from \cite{ASS94}, the last is a definition; the third equality comes from $e^{-2i\pi \rho} g' = 2i\pi \rho'$. The technical part consists of justifying the second equality. 

Let $I \subset \GG$ be an open set containing $\supp(g)$, and $\phi \in C^\infty(\R)$ such that $\phi = 1$ on $I^c$ and $\phi = 0$ on $\supp(g)$. We have:   
\begin{align}
   \findex(\bbLambda_V \ee^{2i\pi \rho(\widehat{H})}) & = \Tr \big( e^{-2i\pi \rho(\widehat{H})} [e^{2i\pi \rho(\widehat{H})},\Lambda_V] \big) 
    \\
    & = \Tr \big( e^{-2i\pi \rho(\widehat{H})} \big(1-\phi(\widehat{H})^2\big) [g(\widehat{H}),\Lambda_V] \big) \label{eq:0p} \\
    & + \Tr \big( e^{-2i\pi \rho(\widehat{H})} \phi(\widehat{H})^2 [g(\widehat{H}),\Lambda_V] \big). \label{eq:0o}
\end{align}
The trace \eqref{eq:0o} vanishes. Indeed, because $[g(\widehat{H}),1_V]$ is trace-class, we can use cyclicity \cite[(3.6)]{DZ24} to move one of the operators $\phi(\widehat{H})$ around. Using that $\phi g = 0$, we deduce that:
\begin{equation}
    \Tr \big( e^{-2i\pi \rho(\widehat{H})} \phi(\widehat{H})^2 [g(\widehat{H}),\Lambda_V] \big) = \Tr \big( e^{-2i\pi \rho(\widehat{H})} \phi(\widehat{H}) [g(\widehat{H}),\Lambda_V] \phi(\widehat{H})\big) = 0.
\end{equation}
We now treat with the term \eqref{eq:0p}. Let $\chi = e^{2i\pi \rho} (1-\phi^2)$, which is a smooth function supported in $\GG$. We claim that 
\begin{equation}\label{eq:0q}
    \Tr \big( \chi(\widehat{H}) [g(\widehat{H}),\Lambda_V] \big) = \Tr \big(  g'(\widehat{H}) \chi(\widehat{H})[\widehat{H},\Lambda_V] \big) = 2i\pi \cdot \Tr \big( \rho'(\widehat{H}) [\widehat{H},\Lambda_V] \big).
\end{equation}
To get the second equality in \eqref{eq:0q}, we simply observe that because $\phi$ vanishes on $\supp(g)$, $\chi g' = e^{2i\pi \rho} g' = 2i\pi\rho'$. We now focus on proving the first equality in \eqref{eq:0q}.

Let $\tg$ be an almost analytic extension of $g$. By the Helffer--Sj\"ostrand formula (with $dm(z) = \frac{dzd\oz}{\pi i}$),
\begin{align}
    \chi(\widehat{H})  [g(\widehat{H}),\Lambda_V] & = \chi(\widehat{H}) \left[ \int_{\C^+} \dd{\tg(z,\oz)}{\oz}  (\widehat{H}-z)^{-1} dm(z), \Lambda_V \right] \\
    & = \chi(\widehat{H}) \int_{\C^+} \dd{\tg(z,\oz)}{\oz}  \left[ (\widehat{H}-z)^{-1} , \Lambda_V \right] dm(z)
    \\
    & = \int_{\C^+} \dd{\tg(z,\oz)}{\oz} (\widehat{H}-z)^{-1} \chi(\widehat{H}) \left[ \widehat{H}  , \Lambda_V \right] (\widehat{H}-z)^{-1} dm(z). \label{eq:0s}
\end{align}

We now observe that the operator $\chi(\widehat{H}) \left[ \widehat{H}  , \Lambda_V \right]$ is trace-class. Indeed, $\chi$ is supported in $\MG$ so $\chi(\widehat{H})$ satisfies the decay bound \eqref{eq:0t}. By \cite[(2.4))]{DZ24}, $[\widehat{H},1_V]$ satisfies 
\begin{equation}
    [\widehat{H},\Lambda_V](x,y) \leq C_N e^{-N d_\ell(x,y) - N d_\ell(y,\p V) - N d_\ell(x, \p V)}, \qquad x,y \in \Z^2.
\end{equation}
By \cite[Corollary 2.4]{DZ24}, $\chi(\widehat{H}) \left[ \widehat{H}  , \Lambda_V \right]$ is trace-class.  Therefore, we can take the trace on both sides of \eqref{eq:0s} indistinctly switch trace and integral. By cyclicity, we can then move one of the resolvents $(\widehat{H}-z)^{-1}$ around and obtain, after integration by parts:
\begin{align}
   \Tr \big( \chi(\widehat{H})  [g(\widehat{H}),\Lambda_V] \big) & = \int_{\C^+} \dd{\tg(z,\oz)}{\oz} (\widehat{H}-z)^{-1} \Tr\big( \chi(\widehat{H}) \left[ \widehat{H}  , \Lambda_V \right] (\widehat{H}-z)^{-1} \big) dm(z)
   \\
   & = \int_{\C^+} \dd{\tg(z,\oz)}{\oz} \Tr\big( (\widehat{H}-z)^{-2}  \chi(\widehat{H}) \left[ \widehat{H}  , \Lambda_V \right] \big) dm(z)
\end{align}
Switching trace and integral produces
   \begin{align}
   \Tr \big( \chi(\widehat{H})  [g(\widehat{H}),\Lambda_V] \big) & = \Tr \left( \int_{\C^+} \dd{\tg(z,\oz)}{\oz} \dd{(\widehat{H}-z)^{-1}}{z} \chi(\widehat{H}) \left[ \widehat{H}  , \Lambda_V \right] dm(z) \right)
   \\
   &  = \Tr \left( \int_{\C^+} \dd{\tg(z,\oz)}{z \p \oz} (\widehat{H}-z)^{-1} \chi(\widehat{H}) \left[ \widehat{H}  , \Lambda_V \right] dm(z) \right)
   \\
   & = \Tr \big(  g'(\widehat{H}) \chi(\widehat{H})[\widehat{H},\Lambda_V] \big).
\end{align}
This proves the first equality of \eqref{eq:0q}, and going back to \eqref{eq:0o}, the proof of \eqref{eq:0r}. \end{proof}


\appendix
\section{Locality estimates}
Here we collect some locality estimates.
\begin{lemma}\label{lem:commutator of switch with local is confined and local}
    If $A$ is a local Hamiltonian, then $[A,\Lambda_U]$ is local and decays away from $\partial U$.
\end{lemma}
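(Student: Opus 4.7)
The plan is to reduce the statement to a direct kernel computation followed by an elementary geometric observation. Since $\Lambda_U$ acts diagonally in the position basis, the kernel of the commutator is
\[
    [A, \Lambda_U]_{xy} = A_{xy}\bigl(\chi_U(y) - \chi_U(x)\bigr), \qquad x, y \in \Z^2,
\]
so it vanishes identically unless $x$ and $y$ lie on opposite sides of $\p U$. For such pairs, the locality hypothesis on $A$ immediately gives the baseline estimate $\|[A,\Lambda_U]_{xy}\| \leq \nu^{-1} \ee^{-\nu \|x-y\|}$.

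The second step is a simple geometric fact. Whenever $\chi_U(x) \neq \chi_U(y)$, the straight segment joining $x$ to $y$ is a connected set meeting both $U$ and $U^c$, and therefore must intersect $\p U$. Picking a crossing point and using the triangle inequality along the segment yields
\[
    d(x, \p U) + d(y, \p U) \leq \|x-y\|.
\]
Combined with the two trivial bounds $\|x-y\| \geq d(x, \p U)$ and $\|x-y\| \geq d(y, \p U)$, this gives
\[
    3\|x-y\| \geq \|x-y\| + d(x, \p U) + d(y, \p U).
\]

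Finally, I would redistribute the exponential decay using this inequality. The baseline estimate then upgrades to
\[
    \|[A, \Lambda_U]_{xy}\| \leq \nu^{-1} \ee^{-\nu\|x-y\|/3}\, \ee^{-\nu d(x, \p U)/3}\, \ee^{-\nu d(y, \p U)/3},
\]
which is simultaneously a locality bound (with new rate $\nu' = \nu/3$) and an exponential decay bound for the kernel as $x$ or $y$ moves away from $\p U$, matching both claims of the lemma.

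I do not anticipate a serious obstacle. The only mild subtlety is justifying that a segment joining $x \in U$ to $y \in U^c$ meets $\p U$ when $U$ is a completely general subset of $\R^2$; this is handled by the standard observation that $U^\circ$ and $(U^c)^\circ$ are disjoint open sets whose union is the complement of $\p U$, so a connected set disjoint from $\p U$ must lie entirely in one of them. No regularity assumption on $\p U$ is required, and the argument is uniform in the geometry of $U$.
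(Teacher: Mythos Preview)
Your proof is correct and follows essentially the same approach as the paper: both observe that the commutator kernel vanishes unless $x$ and $y$ lie on opposite sides of $\partial U$, then use that $\|x-y\|$ dominates $d(x,\partial U)+d(y,\partial U)$ to redistribute the exponential decay with rate $\nu/3$. The only cosmetic difference is that the paper writes $[A,\Lambda_U]=2i\,\Im(\Lambda_U A\Lambda_{U^c})$ and chains triangle inequalities through $d(\cdot,U)$ and $d(\cdot,U^c)$, whereas you compute the kernel directly and use the segment-crossing argument; the content is the same.
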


\begin{proof}
    Assume that $A$ is local in the sense that \eql{\norm{A_xy}\leq \mu^{-1} \ee^{-\mu d(x,y)}} for some $\mu>0$ and some translation-invariant metric $d$.
    
    Writing $[A,\Lambda_U] = 2i\Im{\Lambda_U A \Lambda_{U^c}}$ we have the integral kernel estimate \eql{
    \norm{\left([A,\Lambda_U]\right)_{xy}}&\leq \frac{2}{\mu^2} \Lambda_U(x)\Lambda_{U^c}(y)   \ee^{-\mu d(x,y)}\\
    &\leq \frac{2}{\mu^2}   \ee^{-\mu \left(d(x,y)+d(x,U)+d(y,U^c)\right)}\,.
    }

    Now, using the triangle inequality we have $d(x,y)+d(x,U)\geq d(y,U)$ and hence
    \eql{
        d(x,y)+d(x,U)+d(y,U^c) \geq d(y,U)+d(y,U^c) \geq d(y,\partial U)
    } and similarly for $x$. As a result we find the estimate
    \eql{\label{eq:local and confined}
        \norm{\left([A,\Lambda_U]\right)_{xy}}&\leq \frac{2}{\mu^2}\exp\left(-\frac{1}{3}\mu \left(d(x,y)+d(x,\partial U)+d(y,\partial U) \right)\right)
    } which is what we were trying to prove.
\end{proof}
\begin{lemma}\label{lem:double commutator is trace-class}If $U,V$ are transverse and $P$ is local then $[\Lambda_U,P][\Lambda_V,P]$ is of trace class.
\end{lemma}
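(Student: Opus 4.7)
The plan is to derive a pointwise bound on the operator-valued kernel $K_{xy}$ of $K := [\Lambda_U, P][\Lambda_V, P]$ that is absolutely summable over $\Z^2 \times \Z^2$. Writing $K = \sum_{x,y} K_{xy} \otimes |x\rangle\langle y|$ and using subadditivity of the trace norm gives $\norm{K}_1 \leq m \sum_{x,y} \norm{K_{xy}}$, so trace class will follow from absolute summability of the kernel.

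The first step is to apply \Cref{lem:commutator of switch with local is confined and local} to the local projection $P$ twice, producing constants $C, \beta > 0$ such that
\begin{equation*}
    \norm{[\Lambda_U, P]_{xz}} \leq C \ee^{-\beta(d(x,z) + d(x, \p U) + d(z, \p U))}, \quad \norm{[\Lambda_V, P]_{zy}} \leq C \ee^{-\beta(d(z,y) + d(z, \p V) + d(y, \p V))}.
\end{equation*}
Composing these yields $\norm{K_{xy}} \leq C^2 \sum_z \ee^{-\beta E(x,y,z)}$ with $E(x,y,z) := d(x,z) + d(x, \p U) + d(z, \p U) + d(z,y) + d(z, \p V) + d(y, \p V)$.

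The core geometric step exploits the triangle inequalities $d(x,z) + d(z, \p V) \geq d(x, \p V)$ and $d(z,y) + d(z, \p U) \geq d(y, \p U)$ to conclude $E(x,y,z) \geq \Phi(x) + \Phi(y)$, where $\Phi(w) := d(w, \p U) + d(w, \p V)$. Combined with the trivial bound $E(x,y,z) \geq d(x,z) + d(z,y)$, averaging the two gives
\begin{equation*}
    E(x,y,z) \geq \tfrac{1}{2}\bigl(\Phi(x) + \Phi(y)\bigr) + \tfrac{1}{2}\bigl(d(x,z) + d(z,y)\bigr).
\end{equation*}
The $z$-dependence is absorbed via $\sum_z \ee^{-\beta(d(x,z) + d(z,y))/2} \leq \sum_z \ee^{-\beta d(x,z)/2} < \infty$ uniformly in $x,y$ (using translation-invariance of the metric on $\Z^2$), leaving $\norm{K_{xy}} \leq C' \ee^{-(\beta/2)(\Phi(x) + \Phi(y))}$.

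Transversality (\Cref{def:transverse}) closes the argument: there exist $c, R > 0$ with $\Phi(z) \geq \norm{z}^c - 1$ for $\norm{z} \geq R$, so $\sum_z \ee^{-(\beta/2)\Phi(z)} < \infty$. Hence
\begin{equation*}
    \sum_{x,y} \norm{K_{xy}} \leq C' \Bigl(\sum_x \ee^{-(\beta/2)\Phi(x)}\Bigr)^2 < \infty,
\end{equation*}
which establishes trace class. The main conceptual obstacle is that neither commutator is individually even Hilbert--Schmidt---each is supported near an unbounded boundary curve $\p U$ or $\p V$---so no naive factorization of $K$ through Hilbert--Schmidt operators works. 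Transversality enters at precisely the right moment, after the two commutators have combined geometrically, guaranteeing that the joint decay factor $\ee^{-(\beta/2)(\Phi(x) + \Phi(y))}$ is summable over $(x,y) \in (\Z^2)^2$.
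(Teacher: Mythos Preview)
Your proof is correct and follows essentially the same route as the paper: apply \Cref{lem:commutator of switch with local is confined and local} to each commutator, combine via triangle inequalities to get a kernel bound decaying in $d(x,\partial U)+d(x,\partial V)+d(y,\partial U)+d(y,\partial V)$, and then invoke transversality together with $\norm{A}_1 \leq C\sum_{x,y}\norm{A_{xy}}$ to conclude. The only difference is cosmetic---you spell out the intermediate sum over $z$ and the triangle-inequality step explicitly, whereas the paper compresses these into a single stated kernel estimate.
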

\begin{proof}
    We shall use \Cref{lem:commutator of switch with local is confined and local}. It says that since $P$ is local, the commutator $[\Lambda_U,P]$ is both local and decays away from $\partial U$. When we then take the product of two such commutators we get that there exists some $\mu$ (depending on the locality of $P$) such that
    \eql{
        \norm{\left([\Lambda_U,P][\Lambda_V,P]\right)_{xy}} &\leq \frac{1}{\mu}\ee^{-\mu\left(\norm{x-y}+d(x,\partial U)+d(x,\partial V)+d(y,\partial U)+d(y,\partial V)\right)}\qquad(x,y\in\Z^2)\,.
    } where $d$ is the Euclidean distance. Since $U,V$ are transverse, according to \Cref{def:transverse} there exists some $\alpha>0$ such that for large enough $x$ we have $\psi_{U,V}(x) \geq \norm{x}^{\alpha}$. Combining this with the estimate $\norm{A}_1 \leq \sum_{x,y\in\Z^2}\norm{A_{xy}}$ yields the desired result. 
\end{proof}

\begin{lemma}\label{lem:geometric index well defined}
    If $U,V$ are transversal and $P$ is a local projection then \eql{
    \left[\Lambda_V,\exp\left(2\pi i \Lambda_U P \Lambda_U\right)\right]\in\KK\,.
    }
\end{lemma}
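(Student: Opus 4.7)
The plan is to exploit the double-zero structure of $g(\lambda) \de \ee^{2\pi \ii \lambda} - 1$ at $\lambda = 0$ and $\lambda = 1$. Set $T \de \Lambda_U P \Lambda_U$: this is a self-adjoint positive contraction, so $\spec(T) \subset [0,1]$. Since $g$ has simple zeros at $0$ and $1$, the ratio $\tilde g(\lambda) \de g(\lambda)/[\lambda(1-\lambda)]$ extends to an analytic function on a neighborhood of $[0,1]$, and holomorphic functional calculus yields $g(T) = T(1-T)\, \tilde g(T)$. A one-line computation using $P^2 = P$ gives the crucial identity $T - T^2 = \Lambda_U P \Lambda_{U^c} P \Lambda_U$, so that
\begin{equation*}
    \ee^{2\pi \ii T} - 1 \;=\; \Lambda_U P \Lambda_{U^c} P \Lambda_U \cdot \tilde g(T),
\end{equation*}
in which the first factor has a kernel jointly localized near $\partial U$ in both variables.

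I would then apply Leibniz to write
\begin{equation*}
    \big[\Lambda_V, \ee^{2\pi \ii T}\big] \;=\; \big[\Lambda_V, \Lambda_U P \Lambda_{U^c} P \Lambda_U\big] \, \tilde g(T) \;+\; \Lambda_U P \Lambda_{U^c} P \Lambda_U \cdot \big[\Lambda_V, \tilde g(T)\big],
\end{equation*}
and show each summand is Hilbert--Schmidt (hence compact). For the first summand, \Cref{lem:commutator of switch with local is confined and local} applied throughout gives that $[\Lambda_V, \Lambda_U P \Lambda_{U^c} P \Lambda_U]$ has kernel bounded by $C\ee^{-\mu(\|x-y\| + d(x,\partial U) + d(y,\partial U) + d(x,\partial V) + d(y,\partial V))}$. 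Transversality, $\PUV(x) \geq c\|x\|^\alpha$ for $\|x\|$ large, then upgrades this to a bound of the form $C\ee^{-c(\|x\|^\alpha + \|y\|^\alpha)}$, which is square-summable over $\Z^2 \times \Z^2$; multiplication by the bounded operator $\tilde g(T)$ preserves Hilbert--Schmidtness.

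For the second summand, I would first observe that $T$ is local (a product of local operators) and $\tilde g$ is analytic on a neighborhood of $\spec(T)$, so $\tilde g(T)$ is local via a standard functional calculus argument (Helffer--Sj\"ostrand or contour integration combined with Combes--Thomas). Consequently \Cref{lem:commutator of switch with local is confined and local} yields that $[\Lambda_V, \tilde g(T)]$ has kernel confined near $\partial V$. Writing the product kernel as $\sum_z (\Lambda_U P \Lambda_{U^c} P \Lambda_U)(x,z)\, [\Lambda_V, \tilde g(T)](z,y)$, the summation variable $z$ is forced near $\partial U$ by the first factor and near $\partial V$ by the second; transversality $\PUV(z) \geq c\|z\|^\alpha$ together with the locality factors $\ee^{-\mu\|x-z\|}$ and $\ee^{-\mu\|z-y\|}$ produces, after a case split on whether $z$ is close to $x$, close to $y$, or far from both, a kernel bound of the form $C\ee^{-c(\|x\|^\alpha + \|y\|^\alpha)}$, which is again Hilbert--Schmidt.

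The main obstacle will be the second summand: neither $\Lambda_U P \Lambda_{U^c} P \Lambda_U$ nor $[\Lambda_V, \tilde g(T)]$ is compact on its own, since the boundaries $\partial U$ and $\partial V$ are unbounded and each factor fails to decay away from its own boundary. Compactness of their product emerges only through the interaction of their supports at the level of the $z$-sum, where transversality supplies the missing decay. The resulting bookkeeping of the five exponential factors (one locality, two $\partial U$-confining, two $\partial V$-confining) is essentially a Hilbert--Schmidt analogue of the argument used in the proof of \Cref{lem:double commutator is trace-class}.
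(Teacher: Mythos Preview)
Your proof is correct and takes the same approach as the paper: factor $\ee^{2\pi\ii T}-1 = h(T)(T^2-T)$ with $h$ holomorphic, identify $T^2-T = -\Lambda_U P \Lambda_{U^c} P \Lambda_U$ as local and $\partial U$-confined, and then combine with $\partial V$-confinement and transversality to get compactness. The paper streamlines your argument by skipping the Leibniz split: since $h(T)$ is local, the full product $h(T)(T^2-T)$ is already local and $\partial U$-confined, so a single application of \Cref{lem:commutator of switch with local is confined and local} to $[\Lambda_V,\,\cdot\,]$ delivers the joint $\partial U$/$\partial V$ confinement in one step and your ``hard'' second term never appears.
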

\begin{proof}
    Given that $\exp(2\pi\ii x)-1 = h(x)(x^2-x)$ for some holomorphic function $h:\R\to\C$, and the fact that holomorphic function calculus preserves locality, together with \Cref{lem:commutator of switch with local is confined and local} which allows us to conclude locality and a decay away from $\partial V$, we merely need to establish that $\Lambda_U P \Lambda_U$ is a projection away from $\partial U$. However, the identity \eql{
        \left(\Lambda_U P \Lambda_U\right)^2-\Lambda_U P \Lambda_U = \Lambda_U P \Lambda_U P \Lambda_U - \Lambda_U P \Lambda_U = -\Lambda_U P \Lambda_{U}^\perp P \Lambda_U = -\left[\Lambda_U, P\right] \Lambda_{U}^\perp P \Lambda_U
    } makes it clear the difference is compact.
\end{proof}

\section{A geometric index theorem}\label{sec:geometric index theorem}
\begin{lemma}\label{lem:geometric inex thm}
    Let $U,V$ be two transverse sets. Then the geometric Hall conductance obeys the following index theorem 
    \eql{
        \NN^{U,V}(H) = 2\pi \sigma_{\mathrm{Hall}}^{U,V}(H)
    }
\end{lemma}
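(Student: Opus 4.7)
The plan is to adapt the classical proof that $\NN^{1,2}(H) = 2\pi\sigma_{\mathrm{Hall}}^{1,2}(H)$ (as in \cite{ASS94, EG02}) to the geometric setting. The adaptation is viable because the classical proof uses only two properties of the axis-aligned switch functions $\Lambda_1, \Lambda_2$: that they commute with each other, and that their commutators with a local projection $P$ produce, in appropriate combinations, trace-class operators. Both properties survive in the curved setting: $\Lambda_U$ and $\Lambda_V$ commute as multiplication operators, and by the transversality of $U,V$ together with the locality of $P$, the combinations $[\Lambda_U, P][\Lambda_V, P]$ and analogous products are trace class by \Cref{lem:double commutator is trace-class} and Corollary 2.4 of \cite{DZ24}.

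Concretely, I would proceed in four steps. First, invoke the Avron--Seiler--Simon trace formula: for a unitary $W$ and projection $Q$ with $[W, Q]$ trace class, one has $\findex(\bbLambda_Q W) = \Tr(W^*[Q, W])$. Applied with $W = e^{2\pi i T}$, $T = \Lambda_U P \Lambda_U$, and $Q = \Lambda_V$, this reduces $\NN^{U,V}(H)$ to a trace; the requisite trace-class refinement of \Cref{lem:geometric index well defined} follows by the same locality/transversality argument that proves \Cref{lem:double commutator is trace-class}. Second, expand via Duhamel,
\[
    [\Lambda_V, W] = 2\pi i \int_0^1 e^{2\pi i s T} [\Lambda_V, T] \, e^{2\pi i (1-s) T} \, ds,
\]
and exploit the key simplification $[\Lambda_V, T] = \Lambda_U [\Lambda_V, P] \Lambda_U$, which uses $[\Lambda_V, \Lambda_U] = 0$. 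Third, use the identity $e^{2\pi i P} = \Id$ (valid because $P^2 = P$) by writing $T = P + (T - P)$, where $T - P = -[\Lambda_U, P]\Lambda_U^\perp - \Lambda_U^\perp P \Lambda_U$ is a ``defect'' localized near $\partial U$, and Dyson-expand each exponential in powers of $T - P$. Fourth, the zeroth-order terms vanish by bulk cyclicity, the higher-order terms ($n \geq 2$) vanish after further cyclicity, and the first-order terms combine to produce
\[
    \NN^{U,V}(H) = 2\pi i \Tr\bigl(P [[\Lambda_U, P], [\Lambda_V, P]]\bigr) = 2\pi \sigma_{\mathrm{Hall}}^{U,V}(H).
\]

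The main obstacle is the bookkeeping of trace-ideal membership throughout these cyclicity manipulations, since isolated factors like $[\Lambda_V, P]$, $\Lambda_U^\perp P$, or $T - P$ are not trace class individually---only specific commutator combinations are, by the transversality bound of \Cref{def:transverse}. The strategy is to preserve the Duhamel--Dyson integral representation intact until one reaches a stage where the integrand is manifestly a trace-class product of the form $[\Lambda_U, A_1][\Lambda_V, A_2]$ with $A_1, A_2$ local; only then apply cyclicity. This is the only substantive modification over the straight-edge proof, and it is precisely what the transversality assumption is designed to enable.
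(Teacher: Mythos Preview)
Your outline has a genuine gap at Step~3--4. You write
\[
T - P \;=\; -[\Lambda_U,P]\Lambda_U^\perp - \Lambda_U^\perp P\Lambda_U,
\]
but this is incorrect: decomposing $P = (\Lambda_U+\Lambda_U^\perp)P(\Lambda_U+\Lambda_U^\perp)$ gives
\[
T - P \;=\; -\Lambda_U P\Lambda_U^\perp - \Lambda_U^\perp P\Lambda_U - \Lambda_U^\perp P\Lambda_U^\perp.
\]
The extra term $-\Lambda_U^\perp P\Lambda_U^\perp$ is the full bulk projection on $U^c$ and is \emph{not} localized near $\partial U$. Consequently the Dyson expansion of $e^{2\pi i sT}$ about $e^{2\pi i sP}$ does not produce factors carrying $[\Lambda_U,\cdot]$--type decay; already the zeroth-order contribution reduces (formally, after cyclicity) to $\Tr\bigl(\Lambda_U[\Lambda_V,P]\Lambda_U\bigr)$, which is not trace class. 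Your stated strategy of ``delaying cyclicity until an integrand of the form $[\Lambda_U,A_1][\Lambda_V,A_2]$ appears'' therefore cannot be executed along this route, because the expansion never manufactures the missing $\partial U$--localization.

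The paper avoids this obstacle by a different mechanism. It starts from the trace side, rewrites $P\bigl[[\Lambda_U,P],[\Lambda_V,P]\bigr] = [P\Lambda_U P,\,P\Lambda_V P]$, and observes that
\[
\Tr\bigl([P\Lambda_U P,\,P\Lambda_V P]\bigr)
=\frac{1}{2\pi i}\int_0^{2\pi}\Tr\bigl(\partial_\alpha\,e^{-i\alpha P\Lambda_V P}\,P\Lambda_U P\,e^{i\alpha P\Lambda_V P}\bigr)\,d\alpha,
\]
which after integrating and one cyclicity step collapses to $\Tr\bigl(W^\ast[\Lambda_U,W]\bigr)$ with $W=e^{2\pi i P\Lambda_V P}$; then \cite{ASS94} gives the index. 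The passage from $P\Lambda_V P$ to $\Lambda_V P\Lambda_V$ in the exponent is handled only at the very end, via the compact perturbation $P\Lambda_V P - \Lambda_V P\Lambda_V = \Lambda_V[P,\Lambda_V]P^\perp$ and index stability. The point is that the paper never needs $T-P$ to be small; it uses instead that $e^{2\pi i P\Lambda_V P}=P^\perp + P\,e^{2\pi i P\Lambda_V P}P$ exactly. If you wish to repair your approach while keeping $T=\Lambda_U P\Lambda_U$, the correct ``small'' quantity is $T^2-T=-\Lambda_U P\Lambda_U^\perp P\Lambda_U$, which \emph{is} localized near $\partial U$ and underlies the factorization $e^{2\pi i T}-1=h(T)(T^2-T)$ used in \Cref{lem:geometric index well defined}.
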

The proof follows precisely the same steps as if $U$ were the upper half plane and $V$ were the right half plane. It really only makes use of the transversal condition when passing to arbitrary $U,V$. It is included here for convenience of the reader. We note that usually the index theorem is proven for the Laughlin index, which makes the proof much longer.
\begin{proof}
    The proof of this is basically taken from \cite{KITAEV20062} but also appeared in \cite{FSSWY20}. 

    First, a short calculation shows that \eql{
    P\left[\left[\Lambda_{U},P\right],\left[\Lambda_{V},P\right]\right] = \left[P\Lambda_{U}P,P\Lambda_{V}P\right]\,.
    }
    Inserting this into the definition of $\sigma_{\mathrm{Hall}}^{U,V}(H)	$ in \eqref{eq:geometric Hall conductance} yields
    \eql{
    \frac{1}{\ii}\sigma_{\mathrm{Hall}}^{U,V}(H)	&=	\lctr\left(\left[P\Lambda_{U}P,P\Lambda_{V}P\right]\right) \\
	&=	\left(\frac{1}{2\pi}\int_{0}^{2\pi}d\alpha\right)\lctr\left(\left[P\Lambda_{U}P,P\Lambda_{V}P\right]\right)\\
	&=	\frac{1}{2\pi}\int_{0}^{2\pi}\lctr\left(\ee^{-\ii\alpha P\Lambda_{V}P}\left[P\Lambda_{U}P,P\Lambda_{V}P\right]\ee^{\ii\alpha P\Lambda_{V}P}\right)d\alpha \\
 & =	\frac{1}{2\pi\ii}\int_{0}^{2\pi}\lctr\left(\partial_{\alpha}\ee^{-\ii\alpha P\Lambda_{V}P}P\Lambda_{U}P\ee^{\ii\alpha P\Lambda_{V}P}\right)d\alpha}
 We now switch trace and integral to obtain:
	 \eql{
	\frac{1}{\ii}\sigma_{\mathrm{Hall}}^{U,V}(H) &=	\frac{1}{2\pi\ii}\lctr\int_{0}^{2\pi}\partial_{\alpha}\ee^{-\ii\alpha P\Lambda_{V}P}P\Lambda_{U}P\ee^{\ii\alpha P\Lambda_{V}P}d\alpha\\
	&=	\frac{1}{2\pi\ii}\lctr\left(\ee^{-\ii2\pi P\Lambda_{V}P}P\Lambda_{U}P\ee^{\ii2\pi P\Lambda_{V}P}-P\Lambda_{U}P\right)\\
	&=	\frac{1}{2\pi\ii}\lctr\left(\ee^{-\ii2\pi P\Lambda_{V}P}\underbrace{\left[P\Lambda_{U}P,\ee^{\ii2\pi P\Lambda_{V}P}\right]}_{P\left[\Lambda_{U},\ee^{\ii2\pi P\Lambda_{V}P}\right]P}\right)
	=	\frac{1}{2\pi\ii}\lctr\left(P\ee^{-\ii2\pi P\Lambda_{V}P}P\left[\Lambda_{U},\ee^{\ii2\pi P\Lambda_{V}P}\right]\right)\,.
    }

Next, note that \eql{
\ee^{-\ii2\pi P\Lambda_{U}P}
	&=	P^{\perp}+P\ee^{-\ii2\pi P\Lambda_{U}P}P\,.
    }
    However, 
    \eql{
    \lctr\left(P^{\perp}\left[\Lambda_{V},\ee^{\ii2\pi P\Lambda_{U}P}\right]\right)	&=	\lctr\left(P^{\perp}\left[\Lambda_{V},\ee^{\ii2\pi P\Lambda_{U}P}\right]P^{\perp}\right)\\
	&=	\lctr\left(P^{\perp}\Lambda_{V}\ee^{\ii2\pi P\Lambda_{U}P}P^{\perp}\right)-\lctr\left(P^{\perp}\ee^{\ii2\pi P\Lambda_{U}P}\Lambda_{V}P^{\perp}\right) = 0.
 }
    Hence, \eql{
       2\pi \cdot \sigma_{\mathrm{Hall}}^{U,V}(H)	=\lctr\left(\ee^{-\ii2\pi P\Lambda_{U}P}\left[\Lambda_{V},\ee^{\ii2\pi P\Lambda_{U}P}\right]\right)\,.
        }
    Now, this expression is of the form $\left(W^{\ast}\left[Q,W\right]\right)$
        where W is a unitary and Q is a projection, so we can finish using the well-known result of \cite{ASS94} about the index of a pair of projections.

        Note that to explain how to transition from $P\Lambda_{U}P$ to $\Lambda_{U}P\Lambda_{U}$ in the exponent, we proceed via \cite[Lemma 4.8]{Bols2023} and the identity \eql{P\Lambda_{U}P - \Lambda_{U}P\Lambda_{U} = \Lambda_{U}\left[P,\Lambda_{U}\right]P^\perp\,. }
        This completes the proof.
\end{proof}
\section{A geometric bulk-edge correspondence}\label{sec:geometric BEC}
\begin{theorem}\label{thm:geometric index BEC}
    We have \eql{\NN^{U,V}(H_+)-\NN^{U,V}(H_-) = \widehat{\NN}\left(\widehat{H},V\right)\,.}
\end{theorem}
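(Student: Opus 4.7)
The plan is to adapt the bulk-edge strategy of \cite{FSSWY20} to the geometric setting of transverse sets. Choose a smooth switch $\rho : \R \to [0,1]$ with $\supp(\rho') \subset \GG$, $\rho \equiv 1$ below $\GG$ and $\rho \equiv 0$ above, so that $\rho(H_\pm) = P_\pm$, and set
\[
    A := \Lambda_U P_+ \Lambda_U + \Lambda_{U^c} P_- \Lambda_{U^c}.
\]
The argument proceeds in three stages: (i) replace $\rho(\widehat{H})$ by $A$ inside the exponential without changing the Fredholm index with $\bbLambda_V$; (ii) factor the resulting exponential and split the index multiplicatively; (iii) identify the two resulting bulk indices using the intersection-number identity $\chi_{U^c,V} = -\chi_{U,V}$.

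The main technical step is (i): proving $\findex(\bbLambda_V e^{2\pi i \rho(\widehat{H})}) = \findex(\bbLambda_V e^{2\pi i A})$. I would first invoke the Helffer--Sj\"ostrand formula for $\rho(\widehat{H})$, combined with the resolvent identity and the interface estimate \eqref{eq:interface edfge} on $\widehat{H} - \Lambda_U H_+ \Lambda_U - \Lambda_{U^c} H_- \Lambda_{U^c}$, to show that $\rho(\widehat{H}) - A$ has an integral kernel decaying exponentially both in $\|x-y\|$ and in $\min\!\big(d(x,\p U),\, d(y,\p U)\big)$. Feeding this into Duhamel's formula
\[
    e^{2\pi i \rho(\widehat{H})} - e^{2\pi i A} = 2\pi i \int_0^1 e^{2\pi i s \rho(\widehat{H})} \big( \rho(\widehat{H}) - A \big) e^{2\pi i (1-s) A} \, ds,
\]
and then commuting with $\Lambda_V$, would produce a kernel that additionally decays near $\p V$. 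Transversality of $(U,V)$ in the sense of \Cref{def:transverse} then renders $[\Lambda_V, e^{2\pi i \rho(\widehat{H})} - e^{2\pi i A}]$ trace class, so that the two Fredholm indices must coincide.

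The remaining steps are algebraic. For (ii), the orthogonality of the supports of $\Lambda_U$ and $\Lambda_{U^c}$ gives $[\Lambda_U P_+ \Lambda_U, \Lambda_{U^c} P_- \Lambda_{U^c}] = 0$, hence $e^{2\pi i A} = e^{2\pi i \Lambda_U P_+ \Lambda_U} \cdot e^{2\pi i \Lambda_{U^c} P_- \Lambda_{U^c}}$. Since each factor has compact $\Lambda_V$-commutator by \Cref{lem:geometric index well defined}, multiplicativity of the Fredholm index gives
\[
    \findex\big(\bbLambda_V e^{2\pi i A}\big) = \NN^{U,V}(H_+) + \NN^{U^c,V}(H_-).
\]
For (iii), \Cref{lem:geometric inex thm} combined with the magic formula \eqref{eq:magic formula} yields $\NN^{U,V}(H) = 2\pi \chi_{U,V}\, \sigma^{1,2}_\mathrm{Hall}(H)$; since swapping $U$ with $U^c$ reverses the induced orientation of $\p U$ and therefore flips the sign of the intersection number, $\NN^{U^c,V}(H_-) = -\NN^{U,V}(H_-)$. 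Combining the previous identities produces the desired relation.

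I expect the main obstacle to be the kernel-decay estimate for $\rho(\widehat{H}) - A$ in step (i): transferring the confinement of the error $E = \widehat{H} - \Lambda_U H_+ \Lambda_U - \Lambda_{U^c} H_- \Lambda_{U^c}$ near $\p U$ through the Helffer--Sj\"ostrand formula into a comparable estimate for the smoothed Fermi projector requires careful tracking of decay constants in both arguments simultaneously. This is, however, the same kind of Combes--Thomas / Helffer--Sj\"ostrand computation already used in \cite{DZ24} to establish \eqref{eq:0t} and invoked in the proof of \Cref{prop:edge index thm}, so I expect the adaptation to be technical but routine.
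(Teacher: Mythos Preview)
Your proposal is correct and follows essentially the same route as the paper's sketch: both replace $g(\widehat{H})$ by $A=\Lambda_U P_+\Lambda_U+\Lambda_{U^c}P_-\Lambda_{U^c}$ via Helffer--Sj\"ostrand (using confinement near $\partial U$ and transversality), factor the exponential by commutativity, and split the index multiplicatively. The only minor divergence is your step~(iii): you justify $\NN^{U^c,V}(H_-)=-\NN^{U,V}(H_-)$ by passing through \Cref{lem:geometric inex thm} and \eqref{eq:magic formula}, whereas it can be obtained more directly (and without those ingredients) by noting that $e^{2\pi i\Lambda_U P_-\Lambda_U}e^{2\pi i\Lambda_{U^c}P_-\Lambda_{U^c}}=e^{2\pi i(\Lambda_U P_-\Lambda_U+\Lambda_{U^c}P_-\Lambda_{U^c})}$ differs from $e^{2\pi i P_-}=\One$ by something confined near $\partial U$, so its $\bbLambda_V$-index vanishes.
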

The proof of this theorem already appeared in \cite{DZ24} without recourse to index theory. We provide here a sketch alternative to the tracial proof of \cite{DZ24} that follows the same lines as in \cite{FSSWY20}, but is actually simpler because we work with interface edge rather than sharply-terminated edge.  The interested reader can check details in \cite{FSSWY20}.

\begin{proof}[Sketch of proof]
    First it might be a good idea to establish that \eqref{eq:edge index formula} is well defined. This, however, follows clearly from the fact that since $g$ is smooth, using the Helffer-Sj\"ostrand formula, up to terms which are local and decay away from $\partial U$, we have 
    \eql{
        g(\widehat{H}) &\cong g(\Lambda_U H_+ \Lambda_U + \Lambda_{U^c} H_- \Lambda_{U^c})\\
        &\cong \Lambda_U g( H_+)\Lambda_U + \Lambda_{U^c} g(H_-) \Lambda_{U^c} = \Lambda_U P_+\Lambda_U + \Lambda_{U^c} P_- \Lambda_{U^c}\,.
    } The last equality follows thanks to the spectral gap condition, assuming that $\supp(g')$ is within the mutual gap of both $H_\pm$. 

    Now, since these two latter terms commute, their exponential is the product of the exponentials. Now using the logarithmic property of the Fredholm index (as well as stability up to compacts) we find 
    \eql{
        \widehat{\NN}\left(\widehat{H},V\right) &\equiv  \findex\left(\bbLambda_V \exp\left(2\pi\ii g(\widehat{H}) \right)\right) \\
        &= \findex\left(\bbLambda_V \exp\left(2\pi\ii \left(\Lambda_U P_+\Lambda_U + \Lambda_{U^c} P_- \Lambda_{U^c} \right)\right)\right) \\
        &= \findex\left( \bbLambda_V\left( \exp\left(2\pi\ii \Lambda_U P_+\Lambda_U \right)\exp\left(2\pi\ii \Lambda_{U^c} P_- \Lambda_{U^c} \right)\right)\right)\\
        &= \findex\left( \left(\bbLambda_V \exp\left(2\pi\ii \Lambda_U P_+\Lambda_U \right)\right)\left(\bbLambda_V\exp\left(2\pi\ii \Lambda_{U^c} P_- \Lambda_{U^c} \right)\right)\right) \\
        &= \findex\left(\bbLambda_V \exp\left(2\pi\ii \Lambda_U P_+\Lambda_U \right)\right)+\findex\left(\bbLambda_V\exp\left(2\pi\ii \Lambda_{U^c} P_- \Lambda_{U^c} \right)\right)\\
        &= \findex\left(\bbLambda_V \exp\left(2\pi\ii \Lambda_U P_+\Lambda_U \right)\right)-\findex\left(\bbLambda_V\exp\left(2\pi\ii \Lambda_{U} P_- \Lambda_{U} \right)\right)
    } which is what we wanted to prove.
\end{proof}

\section{A review of \eqref{eq:magic formula}}

In this section, we briefly review the origin of the formula \eqref{eq:magic formula}:
\begin{equation}\label{eq:8a}
    \sigma_b^{U,V}(H) = -i \Tr \big( P \big[ [P,\Lambda_U], [P,\Lambda_V] \big] \big) = \chi_{U,V} \cdot \sigma_\Hall(H), 
\end{equation}
where $U, V \subset \R^2$ are transverse sets and $\chi_{U,V}$ is their intersection number, reviewed in \S\ref{sec:existence of transversal set}. It relies on the following observations.

\noindent \textbf{1. The geometric bulk conductance is linear in $\Lambda_U, \Lambda_V$.} As a consequence, decomposing $U$ and $V$ in connected components produces a sum of geometric bulk conductances over connected sets. Therefore, we can assume that the sets $U$ and $V$ that are connected. 

\noindent \textbf{2. The geometric bulk conductance switches signs when replacing $U$ by $U^c$.} Combining this observation with Step 1, we can assume further that the sets $U^c$ and $V^c$ are connected. 

Brought together, Steps 1 and 2 show that the geometric bulk conductance associated to two general transverse sets is a weighted sum (with weights $\pm 1$) of geometric bulk conductances associated to two simple sets (connected sets whose complement is connected). In other words, we can assume that $U$ and $V$ are transverse simple sets.

\noindent \textbf{3. Boundaries of simple sets are connected.} Under the assumption that $U$ and $V$ are simple, $\p U$ and $\p V$ are the ranges of parametrized curves $\gamma_U$ and $\gamma_V$. We orient them so that $U$ and $V$ lie, respectively, to the left of $\gamma_U, \gamma_V$. In this setup, we define the intersection number $\chi_{U,V}$ as:
\begin{equation}
    \chi_{U,V} = \lim_{t \rightarrow +\infty} \1_V \circ \gamma_U(t) - \lim_{t \rightarrow -\infty} \1_V \circ \gamma_U(t).
 \end{equation}

The next two steps are analytic in nature; their execution demands more care -- with regard to uniformity in the parameters involved -- than is presented here.

\noindent \textbf{4. The geometric bulk conductance can be locally computed.} This is because the commutators $[P,\Lambda_U]$ are supported near $\p U$ and $\p V$, respectively. In particular, the commutator 
\begin{equation}
    \big[[P,\Lambda_U], [P,\Lambda_V] \big],
\end{equation}
is supported near $\p U \cap \p V$; this is a compact set. Hence, knowing $U$ and $V$ in a large enough ball is enough to compute $\sigma_b^{U,V}(H)$.

\noindent \textbf{5. The geometric bulk conductance is a robust quantity.} Specifically changing $U$ and $V$ in compact sets does not modify its value. Therefore, if $\chi_{U,V} = 1$, one can deform $U, V$ to (respectively) sets $U_n, V_n$ that look like (respectively) the upper half-plane and the right-half plane in the ball centered $\Bb(0,n)$ at $0$, of radius $n$ (if $\chi_{U,V} = -1$, it suffices to switch these half-planes).  By Step 4, it follows that 
\begin{equation}
    \sigma_b^{U,V}(H) = \sigma_b^{U_n,V_n}(H) = \sigma_b^{\{x_2>0\},\{x_1 > 0\}}(H) + o(1) = \chi_{U,V} \cdot \sigma_\Hall(H) + o(1), \qquad n \rightarrow \infty.
\end{equation}
Taking the limit as $n \rightarrow \infty$ produces the formula \eqref{eq:8a}.

\bibliographystyle{amsxport}
\bibliography{ref.bib}

\end{document}